\tikzstyle{startstop} = [
\tikzstyle{io} = [
\tikzstyle{process} = [
\tikzstyle{decision} = [
\tikzstyle{arrow} = [
\newcommand{\ra}{\rightarrow}
\newcommand{\ZZ}{{\mathbb Z}}
\newcommand{\frS}{{\mathfrak S}}
\newcommand{\frT}{{\mathfrak T}}
\newcommand{\cB}{\mathcal B}
\newcommand{\cC}{\mathcal C}
\newcommand{\cF}{\mathcal F}
\newcommand{\cS}{\mathcal S}
\newcommand{\cT}{{\mathcal T}}
\newcommand{\cU}{\mathcal U}
\newcommand{\SL}{\mathrm{SL}_2(\ZZ)}
\newcommand{\SLn}{\mathrm{SL}_2(\ZZ_n)}
\newcommand{\hT}{{\hat T}}
\newcommand{\hS}{{\hat S}}
\newcommand{\Ind}{\operatorname{Ind}}
\newcommand{\Gth}{\Gamma_\theta}
\newtheorem{theorem}{Theorem}[section]
\newtheorem{corollary}{Corollary}[theorem]
\title{Classification of Fermionic Topological Orders from Congruence Representations}
\author[a,b,c]{Gil Young Cho,}
\author[a,b]{Hee-Cheol Kim,}
\author[a]{Donghae Seo}
\author[b]{and Minyoung You}
\affiliation[a]{Department of Physics, Pohang University of Science and Technology, Pohang, Gyeongbuk, 37673, Korea}
\affiliation[b]{Asia-Pacific Center for Theoretical Physics, Pohang, Gyeongbuk, 37673, Korea}
\affiliation[c]{Center for Artificial Low Dimensional Electronic Systems, Institute for Basic Science (IBS), Pohang 37673, Korea}
\emailAdd{gilyoungcho@postech.ac.kr}
\emailAdd{heecheol@postech.ac.kr}
\emailAdd{donghae98@postech.ac.kr}
\emailAdd{minyoung.you@apctp.org}
\abstract{The fusion rules and braiding statistics of anyons in \texorpdfstring{$(2+1)$}{21}D fermionic topological orders are characterized by the modular data of a super-modular category. On the other hand, the modular data of a super-modular category form a congruence representation of the \texorpdfstring{$\Gth$}{Gth} subgroup of the modular group \texorpdfstring{$\SL$}{SL}. We provide a method to classify the modular data of super-modular categories by first obtaining the congruence representations of \texorpdfstring{$\Gth$}{Gth} and then building candidate modular data out of those representations. We carry out this classification up to rank $10$. We obtain both unitary and non-unitary modular data, including all previously known unitary modular data, and also discover new classes of modular data of rank $10$. We also determine the central charges of all these modular data, without explicitly computing their modular extensions.
}
\begin{document}

\maketitle
\flushbottom

\section{Introduction}
The emergence of anyons in (2+1)D and their physical properties have been intensely investigated both in condensed matter physics and high-energy physics \cite{Kitaev_2006, MooreSeiberg, WittenJones}. For example, they are at the heart of the physics of important condensed matter systems such as fractional quantum Hall states and spin liquids \cite{FQHE,Wen1989, SpinLiquidReview,Kitaev_2003}. In addition to being of theoretical interest, the physics of anyons is also highly relevant for the realization of the decoherence-tolerant quantum computation, i.e. topological quantum computation \cite{Kitaev_2003, Nayak_QuantumComputation}. In spite of the  strong  interest in the study of anyons, a complete classification of anyon theories has  proven elusive.

Mathematically, a topologically-ordered phase containing anyons is completely characterized (up to invertible topological orders) by the fusion and braiding rules of anyons along with a set of consistency relations; this structure is known as a braided fusion category (BFC) \cite{Wen2015Boson, Wen2015Fermion, BK}. When the fundamental particle in the theory is bosonic, the corresponding mathematical structure is called as the modular tensor category (MTC). Similarly, when the fundamental particle is fermionic, it is called a super-modular category (SMC) \cite{Wen2015Fermion, FermionicModular16}. While such bosonic and fermionic theories are intimately related, they form different mathematical structures --- these will be discussed below in more detail. Thus, the classification programs for bosonic and fermionic theories should be carried out separately.

A direct classification of  the defining data of BFCs is known to be prohibitively difficult because of huge gauge redundancies. Thus, attempts at classification have instead focused on the so-called modular data (MD) of BFCs, which are the ``fingerprints" of  BFCs. The MD are gauge invariant and hence more amenable to classification. See refs. \cite{ClassificationBosonic, ClassificationBosonicbyRank, Mtheoretic, ClassificationRank6, ClassificationRank8} for the previous efforts in this line of thinking. Recently, ref. \cite{Reconstruction} has introduced a method which uses representations of $\SL$ to classify the MD of MTCs, i.e. bosonic topological orders, and used it to classify MD up to rank $6.$ They makes use of the fact that every MD given by a pair $(S,\  T)$ forms a projective \emph{congruence} representation of $\SL$ \cite{Cong_Subgroup_Frob_Schur}. Since every congruence representations of $\SL$ can be constructed explicitly \cite{SymRepSL, SL2Reps}, this gives us a list of candidates from which valid MD can be constructed, and leads to the most complete classification of bosonic topological orders so far obtained. 

There have also been attempts to classify SMCs, which characterize fermionic topological orders. The fusion rules of unitary SMCs have been completely classified up to rank 6 \cite{ClassificationRank6} and partially for rank 8 \cite{ClassificationRank8}, while their explicit MD, have been partially classified in ref. \cite{Wen2015Fermion}. 

We go beyond these results  and obtain a classification of fermionic MD up to rank 10. While we cannot prove that our classification is complete, we recover all previously known unitary MD \cite{Wen2015Fermion, ClassificationRank6, ClassificationRank8}, obtain non-unitary MD (which had previously not been classified), and furthermore discover two completely new classes of MD (with previously unknown fusion rules). The new types of MD are primitive in that they are not obtained from stacking other theories, and they both contain large fusion coefficients ($\hat{N}_{ij}^{ \ \ k} \geq 3$)  and large total quantum dimensions. Moreover, our method allows us to identify the central charge modulo $\frac{1}{2}$ of all these theories, without having to compute their modular extensions. This is in contrast with previous approaches \cite{Wen2015Fermion, FermionicModular16}, where the determination of the central charge requires an explicit knowledge of the modular extensions.

The starting point of our method is the result of ref. \cite{Cong_Subgroup_Super_Modular} that the MD of the fermionic quotient of an SMC form projective congruence representations of $\Gamma_\theta$, the subgroup of $\SL$ generated by $s = \begin{pmatrix}  0 & - 1 \\ 1 & 0 \end{pmatrix}$ and $t^2$ where $ t = \begin{pmatrix} 1 & 1 \\ 0 & 1 \end{pmatrix}$ \cite{FermionicModular16}. Thus, if we have a list of congruence representations of $\Gamma_\theta$, we could hope to do something similar to the bosonic classification of MD carried out in ref. \cite{Reconstruction}. We do exactly that, by first obtaining the list of $\Gamma_\theta$ congruence representations using representation theory, and then constructing and checking potential MD from the representations.

In section \ref{Sec:SMC}, we introduce SMCs and some known facts about them which shall be relevant for our classification procedure. In section \ref{Sec:RepTheory}, we discuss the congruence representation theory of $\Gth$ and prove the necessary theorems which allow us to obtain all congruence representations of $\Gth$. In section \ref{Sec:Construction}, we explain how to construct MD from $\Gth$ representations. In section \ref{Sec:Results} we present our results, which classify fermionic MD up to rank $10$, and compare our results to previous  results. 

\newpage

\begin{table}[t]
    \centering
    \resizebox{\columnwidth}{!}{
    \begin{tabular}{c c }
    
    \multicolumn{2}{c}{\textsc{\Large Notation}} 
    
    \\[5mm]

        $\cB$ &  Super-modular category
        
        \\[2mm]
        $\cB_0$ & Fermionic quotient of $\cB$
        \\[2mm]
        
        $\cC$ & Modular category (may be a spin modular category)
        \\[2mm]
        
        $\cF_0$ &  Trivial super-modular category, also known as $\rm{sVec}$ 
        \\[2mm] 
        
        $f$ & The fermion object of a super-modular or spin modular category
        \\[2mm]
        
        $D$ & Total quantum dimension of a fusion category
        \\[2mm]
        $d_i$ & Quantum dimension of a simple object $i$
        \\[2mm]
        $\theta_i$ & Topological spin of a simple object $i$
        
        \\[2mm]
        $c$ & Chiral central charge of a modular or super-modular category
        
        \\[2mm]
        $N_{ij}^{ \ \ k}$ & Fusion coefficients of a fusion category
        
        \\[2mm]
        $\hat{N}_{ij}^{ \ \ k}$ & Fusion coefficients of the fermionic quotient $\cB_0$
        
        \\[2mm]
        
            $\nu_2(i)$ & Second Frobenius-Schur indicator of a self-dual object $i$
        
        \\[2mm]

        $s,\  t$ & Generators of $\SL$ 
        \\[2mm]
        $S, \ T $ &  $S$- and $T$-matrices of a modular or super-modular category
        
        \\[2mm] 
        $\tilde{\Phi}$ & Projective representation of $\SL$ formed by $S, \ T$ of a modular category $\cC$

        \\[2mm]
        $\Phi$ & Lift of $\tilde{\Phi}$ to a  linear representation of $\SL$
        
        \\[2mm] 
        $ \hS,\ \hT^2$ &  Modular  matrices of a fermionic quotient $\cB_0$
        
        \\[2mm]
        $\tilde{\rho}$ & Projective representation of $\Gth$ formed by $\hS, \ \hT^2$

         \\[2mm]
         $\rho$ & Lift of $\tilde{\rho}$ to a linear representation of $\Gth$
        
          \\[2mm]
        $\mathfrak{S}, \ \mathfrak{T^2}$ & Representation matrices $\rho(s), \ \rho(t^2)$ of $\rho$

        \\[2mm]
        $R$ & Linear representation of $\SL$ induced from $\rho$

        \\[2mm]
        
        $\cS, \ \cT$ & Representation matrices $R(s), \ R(t)$ of $R$

    \end{tabular}}
\end{table}

\newpage

\section{Fermionic topological orders and super-modular categories \label{Sec:SMC}}

Fermionic topological orders are zero temperature phases beyond Landau's symmetry breaking paradigm, realized in a fermionic many-body system \cite{Wen2015Fermion}. In $(2+1)$D,  fermionic topological orders (up to invertible topological orders) are characterized by the fusion rules and braiding statistics of emergent point-like excitations (anyons) together with the fundamental fermionic excitation. The fusion and braiding of these excitations form a categorical structure, known as a SMC  \cite{Wen2015Fermion,FermionicModular16}. 

A SMC is a ribbon fusion category with its nontrivial transparent object  isomorphic to the local fermion object $f$. We refer to ref. \cite{FermionicModular16} for details of SMCs, and will only introduce some key properties which will be necessary for our purposes. Physically, ``transparent" means that $f$ has trivial mutual statistics with any other point-like excitation.  The simple objects of an SMC always come in pairs, as for any anyon $a,$ $f \times a$ is a distinct object. Hence the rank of an SMC is always even. The subcategory of transparent objects (the M{\"u}ger center) of an SMC is $\rm sVec$, the category of super-vector spaces. The trivial SMC is equivalent to $\rm sVec$, and we shall denote it as $\cF_0.$ 

As in bosonic MTCs, an SMC has gauge-invariant data called the $S$- and $T$-matrices, and the anyons form a fusion ring given by 
\begin{equation}
    i \times j =\sum \limits_{k \in \Pi}     N_{ij}^{ \ \ k} k
\end{equation}
where $N_{ij}^{\  \ k}$ are the fusion coefficients and $\Pi$ is the label set of anyons. Other quantities can be written in terms of the $S$- and $T$-matrices:
\begin{itemize}
    \item The topological spin of an anyon $i$, $\theta_i = T_{ii}$
    
    \item The quantum dimension of an anyon $i$, $d_i = \frac{S_{i1}}{S_{11}}$, where the index $1$ corresponds to the vacuum
    
    \item The total quantum dimension $D^2 = \sum_{a \in \Pi} d_i^2 $, where $\Pi$ denotes the label set of the anyons.
\end{itemize}
These data satisfy the \emph{balancing equation} \cite{BK}:
\begin{equation}
    S_{ij} = \frac{1}{D}\theta_i^{-1} \theta_j^{-1} \sum \limits_k N_{ij}^{ \ \ k} \theta_k d_k.
    \label{eq:balancing}
 \end{equation}

The $S$-matrix of an SMC is always degenerate. However, it is known that the MD of an SMC always admit a tensor decomposition \cite{Wen2015Fermion, FermionicModular16}
\begin{equation}
    S = \frac{1}{\sqrt{2}}\begin{pmatrix} 1 & 1 \\ 1& 1\end{pmatrix} \otimes \hS, \quad T = \begin{pmatrix} 1 & 0 \\ 0&-1\end{pmatrix} \otimes \hT.
\label{eq:tensor_matrices}
\end{equation}
where  $\hS$ is unitary. Given $T$, $\hT$ is not well-defined, but $\hT^2$ is. We can always uniquely determine $(\hS, \ \hT^2)$ in terms of $S,\ T$  and vice versa, so we refer to them interchangeably as fermionic MD, and also simply as MD when no confusion with the bosonic case should arise. 

Since the simple objects of an SMC always come in pairs related by fusion with $f$, i.e. $a$ and $a \times f \equiv a^f$, we can decompose the set of simple objects into two. After this decomposition, we obtain the \emph{fermionic quotient} $\cB_0$ of an SMC $\cB$, which is a fusion category with half the number of simple objects as $\cB$. While the decomposition is not canonical, the properties which follow will not depend on the choice \cite{FermionicModular16}. $(\hS$,\ $\hT^2)$ can now be thought of as the MD of the fermionic quotient. The anyons of the fermionic quotient form a fusion ring among themselves, 
\begin{equation}
    i \times j = \sum \limits_{k \in \Pi_0} \hat{N}_{ij}^{ \ \ k} k
\end{equation}
where $\Pi_0$ is the label set of anyons of $\cB_0,$ and $\hat{N}_{ij}^{ \ \ k}$ satisfy
\begin{equation}
    \hat{N}_{ij}^{ \ \ k} = N_{ij}^{ \ \ k} + N_{ij}^{ \ \ k^f}.
    \label{eq:fusion_quotient}
\end{equation}
 
Not all $(\hS, \ \hT)$ can describe a valid SMC. There are  necessary conditions which $(\hS,\ \hT)$ need to satisfy if they are to describe a valid SMC \cite{Wen2015Fermion, FermionicModular16, ClassificationRank6}:

\begin{itemize}
    \item Verlinde formula:
    \begin{equation}
        \hat{N}_{ij}^{ \ \ k} = \frac{2}{D} \sum \limits_{m \in \Pi_0} \frac{\hS_{im} \hS_{jm} \hS_{km}^*}{d_m},
        \label{eq:Verlinde}
    \end{equation}
    
    
    \item Frobenius-Schur indicator condition:
    \begin{equation}
        \pm 1 = \nu_2(a) = \frac{2}{D^2} \sum_{j, k \in \Pi_0} \hat{N}_{jk}^{ \ \ a} d_j d_k \left( \frac{\theta_j}{\theta_k} \right)^2
        \label{eq:FS}
    \end{equation}
    for any self-dual anyon $a$ (i.e. an anyon which satisfies $a \times a = 1 + \ldots$).  Note that while $\theta_i = \hT_{ii}$ depend on the choice of the fermionic quotient, $\hT^2_{ii}$ and hence $\theta_i^2$ are well-defined.
\end{itemize}

As for MTCs, a full characterization of SMCs requires gauge-dependent data called $R$- and $F$-tensors. The MD $(\hS, \ \hT^2)$ are gauge-invariant and much easier to classify, but give only a partial characterization: there may be multiple inequivalent fusion categories with the same MD \cite{MS17}, and even if we find candidate MD which satisfy the conditions \ref{eq:Verlinde} and \ref{eq:FS}, it remains to explicitly construct and prove the existence of a fusion category which gives rise to such MD. However, the MD capture a large part of the physical properties of interest \cite{Wen2015Fermion}, and the conditions \ref{eq:Verlinde} and \ref{eq:FS} are stringent enough that they allow us to narrow down the list of candidates considerably. In this paper, we will refer to MD satisfying the above conditions, plus those discussed in section \ref{Sec:MD_cong} and section \ref{Sec:resolved} (see around eq. \ref{eq:Hmatrix}) as \emph{valid} MD, even though we will not prove the existence of an SMC realizing those MD.

\subsection{Minimal modular extension of super-modular categories}

A \emph{spin modular category} is an MTC with a fermion --- i.e. a simple object $f$ with topological spin $\theta_f = -1$ and $f \times f = 1$ (if there are multiple fermions, we pick a distinguished fermion). Condensing the fermion results in a SMC \cite{FermionicModular16}.  

Given an SMC $\cB$, a \emph{modular extension} of $\cB$ is defined as a spin modular category where a condensation of $f$ yields $\cB.$ The modular extension whose quantum dimension satisfies $D_{\cC}^2 = 2 D^2_{\cB} $ is called a minimal modular extension. Colloquially speaking, a minimal modular extension of an SMC $\cB$ can be thought of as a ``smallest'' possible MTC which contains  $\cB$, built by adding enough anyons with nontrivial braiding with $f$ so that the $S$-matrix becomes non-degenerate. For example, both the toric code MTC and the Ising MTC are minimal modular extensions of the trivial SMC $\cF_0.$ If a minimal modular extension exists, there are always $16$ of them \cite{Lan_Kong_Wen_2016}.

It is a conjecture that every SMC admits a minimal modular extension \cite{FermionicModular16}. We assume this conjecture throughout the paper. Specifically, this assumption is used in section \ref{Sec:lifting} and section \ref{Sec:central_charge}. Alternatively, in case the conjecture fails, our work can be viewed as applying to the subset of  SMCs which do admit a minimal modular extension. As physical topological orders are described by SMCs admitting minimal modular extensions \cite{Wen2015Fermion}, this is at any rate sufficient for physical purposes. We shall also drop the adjective ``minimal'' from here on and simply refer to modular extensions with the understanding that they always refer to minimal modular extensions unless otherwise stated. 

\subsection{Fermionic modular data and congruence representations of \texorpdfstring{$\Gth$}{Gamma\_theta}}
\label{Sec:MD_cong}

Since $S$ is degenerate for SMCs, $S$ and $T$ do not form a projective representation of $\SL$ as in the bosonic case. This seems to make the application of representation theory to classification of SMCs difficult. However, eq. \ref{eq:tensor_matrices} allows us to define $\hS$ and $\hT^2$, which are unitary matrices. These together generate a projective representation of a subgroup $\Gth$ of $\SL$ \cite{Wen2015Fermion, FermionicModular16}. The fact only $\hT^2$ is well-defined reflects the fact that $\Gth$ is a  subgroup of $\SL$ generated by $s$ and $t^2$. 

When the SMC is \emph{split}, i.e. it can be written as $\mathcal{C}_b \boxtimes \cF_0$ for some bosonic MTC $\mathcal{C}_b$, we will be able to find a choice $\hT$ of the square root of $\hT^2$ such that 
\begin{equation}
    (\hS \hT)^3 = \varphi \hS^2
    \label{eq:proj_extendalbe}
\end{equation}
(where $\varphi$ is some phase). In this case, the projective $\Gth$-representation formed by $(\hS, \ \hT^2)$ will be called \emph{projectively extendable}, as it can be extended to a projecetive representation $(\hS, \ \hT)$ of $\SL$. However, for those SMCs which are not split, no choice of $\hT$ will be able to satisfy eq. \ref{eq:proj_extendalbe} \cite{FermionicModular16, Cong_Subgroup_Super_Modular}. For example, the following $2$-dimensional matrices, corresponding to a rank-$4$  SMC denoted ${\rm PSU}(2)_6$, are not projectively extendable:
\begin{equation}
    \hat{S} = \frac{1}{\sqrt{4 + 2 \sqrt{2}}} \begin{pmatrix}1 & 1 + \sqrt{2}\\ 1+\sqrt{2} & -1\end{pmatrix}, \  \hat{T}^2 = \begin{pmatrix}1 &0 \\ 0& -1\end{pmatrix}.
\end{equation}
The SMC ${\rm PSU}(2)_6$ is an example of a \emph{primitive} SMC: one which cannot be written as a stacking of two other fusion categories. Primitive SMCs, together with primitive bosonic MTCs, can be used to build all SMCs via stacking. Not all non-split SMCs are primitive: the stacking of two primitive SMCs is not primitive but not necessarily split. 

In ref. \cite{Cong_Subgroup_Super_Modular}, it was shown that any projective $\Gth$-representation arising from an SMC (assuming that the SMC admits a modular extension) is a \emph{congruence} representation, i.e. its kernel contains a principal congruence subgroup of $\SL$. The definition and properties of congruence representations of $\SL$ and $\Gth$ will be detailed in section \ref{Sec:congruence_rep_theory}. In the sequel, any representation of $\SL$ or $\Gth$ we mention will be assumed to be congruence unless otherwise stated.


\section{Classification of congruence representations of \texorpdfstring{$\Gth$}{Gamma\_theta}}
\label{Sec:RepTheory}

A complete list of congruence irreps of $\SL$, organized either by level or by dimension, can be obtained from ref. \cite{SL2Reps}.
In this section we will explain how to obtain the congruence representations of $\Gth$ from the congruence representations of $\SL.$  The essential idea is the theory of induction and restriction of representations, which allows us to obtain the irreps of a subgroup $H < G$ if we know the irreps of  $G$. 

\subsection{Congruence representations of  \texorpdfstring{$\SL$}{SL} and  \texorpdfstring{$\Gth$}{theta}  }
\label{Sec:congruence_rep_theory}
We first review the definition and properties of congruence representations. 

Recall that $\SL$ is the group of $2 \times 2$ matrices of unit determinant. A \emph{principal congruence subgroup of level $n$} (for some positive integer $n$) is defined as a subgroup
\begin{equation}
    \Gamma(n) = \Big\{ \begin{pmatrix} a & b \\ c & d 
    \end{pmatrix} \in \SL: ad \equiv 1 \mod n, \ bc \equiv 0 \mod n \Big\}.
\end{equation}
$\Gamma(n)$ becomes smaller as $n$ gets larger, and in fact, $\Gamma(n_1)$ is a subgroup of $\Gamma(n_2)$ if $n_2$ is divisible by $n_1$.

A congruence representation $R$ of $\SL$ is a representation $R: \SL \rightarrow \mathrm{ GL} (V) $ whose kernel $\ker R$ contains $\Gamma(n)$ as a subgroup. The smallest $n$ for which $R$ satisfies $\ker R \geq \Gamma(n)$ is called the \emph{level} of $R.$ Equivalently, a congruence representation of $\SL$ is a representation of $\SL$ which factors through the finite group $\SLn$. Note that $\SLn \simeq \SL/\Gamma(n).$

Similarly, for $\Gth < \SL$, we define congruence representations as those representations $\rho$ of $\Gth$ which satisfy $\ker \rho \geq \Gamma(n)$, or equivalently, those representations of $\Gth$ which factor through the finite group $\Gth/\Gamma(n).$ The smallest $n$ for which the preceding condition is satisfied will be the level of $\rho$. Note that, $\Gth$ itself contains $\Gamma(2)$ as a subgroup, while it does not contain any $\Gamma(k)$ for $k$ odd. Hence the level of a congruence representation of $\Gth$ is always even. 

In terms of generators, $\SL$ can be presented as 
\begin{equation}
    \langle s, t: s^4 = 1, (st)^3 = s^2 \rangle,
\end{equation}
whereas $\Gth$ is presented as
\begin{equation}
    \langle s, t^2: s^4 = 1, s^2 t^2 = t^2 s^2 \rangle. 
\end{equation}
The relations of $\Gth$ are too simple to admit an interesting representation theory. However, if we require the representations to be \emph{congruence}, there are many more relations between the generators. These relations depend on the level $n$. For congruence representations of $\SL$, the relations are presented in, e.g. \cite{Eholzer}. For congruence representations of $\Gth$, we present the explicit relations in appendix \ref{Sec:Cong}.

\subsection{Representations of a subgroup}

Consider a finite group $G$ and a subgroup $H < G$. Suppose we have a representation of $G$, i.e. some $R: G \ra \mathrm{GL}(V)$ where $V$ is some vector space.  We can obtain a representation $R|_{H}$ of $H$ by \emph{restriction}, which simply means that we limit ourselves to $R(h)$ such that $h \in H < G$.  If the restriction $R|_H$ of an irrep $R$ is again irreducible, both $R$ and $R|_H$ are of the same dimension and we say that $R|_H$ \emph{extendable}, since it is an $H$-representation that can be extended to a $G$-representation by simply assigning values to those elements of $G$ not in $H$. In general, however, irreps do not map to irreps under restriction.

On the other hand, given any representation $\pi$ of $H$, we can construct an  \emph{induced representation}  ${\rm Ind}^G_H \pi$ of $G$ (this is unique for a given $\pi$). While not every $H$-representation can be \emph{extended} to a $G$-representation, every $H$-representation can be \emph{induced} to a $G$-representation. Restriction and induction are ``adjoint'' to each other in the following sense (this property is known as \emph{Frobenius reciprocity}):

\begin{theorem}[Frobenius reciprocity]
Given $H < G$ and an irrep $\pi$ of $H$, the induced representation of $\pi$ decomposes as a direct sum of $G$-irreps $R_i$, where each irrep appears with the multiplicity $m_i$ equal to the number of times its restriction to $H$ contains $\pi$. In other words, $\Ind^G_H \pi = \bigoplus_i m_i R_i$ where $R_i$ are irreps of $G$ such that $\Res_H R_i = \bigoplus_i m_i \pi \oplus ...$ ($...$ are other irreps).
\end{theorem}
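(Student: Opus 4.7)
The plan is to prove Frobenius reciprocity by first establishing the canonical adjunction isomorphism between Hom spaces, and then extracting the multiplicity statement via Schur's lemma. The starting point is a concrete model of induction: define
\begin{equation*}
    \Ind^G_H \pi = \CC[G] \otimes_{\CC[H]} V_\pi,
\end{equation*}
where $V_\pi$ is the representation space of $\pi$ and $G$ acts by left multiplication on the $\CC[G]$ factor. Equivalently, one can realize $\Ind^G_H \pi$ as the space of functions $f: G \to V_\pi$ satisfying $f(gh) = \pi(h)^{-1} f(g)$ for all $h \in H$, with $G$ acting by left translation.

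The core step is to exhibit the natural isomorphism
\begin{equation*}
    \operatorname{Hom}_G(\Ind^G_H \pi, R) \cong \operatorname{Hom}_H(\pi, \Res_H R)
\end{equation*}
for any $G$-representation $R$. I would construct explicit inverse maps. Given $\varphi$ on the left, set $\psi(v) = \varphi(1 \otimes v)$; this is automatically $H$-equivariant because $h\cdot(1\otimes v) = h\otimes v = 1\otimes \pi(h)v$ in the balanced tensor product. Conversely, given $\psi$ on the right, set $\varphi(g \otimes v) = R(g)\psi(v)$, and verify that it is well-defined on the balanced tensor product and $G$-equivariant.

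Applying the adjunction with $R = R_i$ a fixed $G$-irrep, and writing $\Ind^G_H \pi = \bigoplus_j m_j R_j$, Schur's lemma gives $\dim \operatorname{Hom}_G(\Ind^G_H \pi, R_i) = m_i$. On the other hand, if $\Res_H R_i$ is decomposed into $H$-irreps, then by Schur's lemma at the level of $H$, $\dim \operatorname{Hom}_H(\pi, \Res_H R_i)$ equals precisely the multiplicity of $\pi$ in $\Res_H R_i$. Equating the two integers for each $i$ gives the theorem.

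The main obstacle is not conceptual but bookkeeping: checking that the proposed inverse maps respect the defining relation $gh \otimes v = g \otimes \pi(h)v$ of the balanced tensor product and intertwine the group actions on both sides. A subsidiary point worth being explicit about is that the decomposition into irreps relies on complete reducibility of finite-dimensional representations, which holds here because all the groups eventually relevant, namely $\SLn$ and the quotient $\Gth/\Gamma(n)$ obtained after factoring through $\Gamma(n)$, are finite and we work over $\CC$.
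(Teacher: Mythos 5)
Your proof is correct: the tensor--hom adjunction $\operatorname{Hom}_G(\Ind^G_H\pi,R)\cong\operatorname{Hom}_H(\pi,\operatorname{Res}_H R)$ combined with Schur's lemma over $\CC$ is the standard argument, your explicit inverse maps are well defined on the balanced tensor product, and you rightly flag that complete reducibility is available because everything factors through the finite groups $\SLn$ and $\Gth/\Gamma(n)$. The paper states Frobenius reciprocity as a standard fact without proof, so there is nothing to compare against; your argument fills that in correctly.
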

Frobenius reciprocity allows us to obtain every irrep of $H$ from restriction of irreps of $G$:
\begin{corollary}
Given $H \subset G$, every irrep $\pi$ of $H$ is contained in the restriction of some irrep $R$ of $G$, i.e. $\Res_H R = m \pi \oplus ...$ for some $R,$ where $m$ is the multiplicity.  
\end{corollary}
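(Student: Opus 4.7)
The plan is to apply Frobenius reciprocity directly to the induced representation of $\pi$. Given an irrep $\pi$ of $H$, I would first form the induced representation $\Ind_H^G \pi$, which is a $G$-representation of dimension $[G:H] \cdot \dim \pi$. Since $\pi$ is a nonzero irrep, this dimension is strictly positive, so $\Ind_H^G \pi$ is itself nonzero as a $G$-representation.

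Next, decomposing $\Ind_H^G \pi$ into $G$-irreps gives $\Ind_H^G \pi = \bigoplus_i m_i R_i$. Because the total dimension is strictly positive, at least one multiplicity $m_i$ must be nonzero. By Frobenius reciprocity (the preceding theorem), the multiplicity $m_i$ of $R_i$ in $\Ind_H^G \pi$ equals the multiplicity of $\pi$ in $\operatorname{Res}_H R_i$. Therefore, for any index $i$ with $m_i > 0$, setting $R := R_i$ and $m := m_i \geq 1$ yields a $G$-irrep whose restriction to $H$ satisfies $\operatorname{Res}_H R = m\,\pi \oplus \cdots$, which is exactly the conclusion.

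The only subtle point is the nonvanishing of $\Ind_H^G \pi$, which follows immediately from the dimension formula (or from any standard construction of the induced module). Beyond that, the statement is an essentially trivial consequence of Frobenius reciprocity, with no real obstacle. In the context of the paper, the role of this corollary is crucial methodologically, even though its proof is short: it guarantees that by systematically restricting all $G$-irreps to $H$, one recovers every $H$-irrep, which is precisely how congruence representations of $\Gth$ will be harvested from those of $\SL$ in the following sections.
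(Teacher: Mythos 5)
Your proof is correct and follows essentially the same route as the paper's: induce $\pi$ to $G$, decompose into irreps, and apply Frobenius reciprocity to conclude that any constituent $R_i$ restricts to contain $\pi$. The only difference is that you explicitly note the nonvanishing of $\Ind_H^G \pi$, a point the paper leaves implicit.
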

\begin{proof}
Every irrep $\pi$ of $H$ has an induced representation $\Ind_H^G \pi,$ which is a representation of $G,$ and this decomposes as $\bigoplus_i m_i R_i$ for irreps $R_i$ of $G$. By Frobenius reciprocity, $\Res_H R_i = m_i \pi \oplus ...$ so these $R_i$ are precisely those irreps of $G$ whose restriction to $H$ contain $\pi.$
\end{proof}

An explicit description of induced representations is given as follows.  Let $\pi : H \ra {\rm GL}(V)$ be a representation of $H < G$ on a vector space $V.$ Let $N = [H: G] $ be the index of $H$ in $G$ and $g_i,$ $i = 1, ..., N$, be the full set of representatives of left cosets in $G/H$.  The $G$-representation ${\rm Ind}^G_H \pi$ acts on the vector space 
$$W = \bigoplus \limits_{i = 1}^N g_i V, $$
i.e. $N$ copies of $V$. For any $g \in G$, its action on this vector space is given by the following:
\begin{itemize}
    \item First, for each coset $g_i$, $g \cdot g_i = g_{j(i)} h_i$ for some (possibly different) coset corresponding to $g_{j(i)}$ and $h_i \in H$. Once we have fixed the set  $\{ g_i \}$  of coset representatives, this decomposition is unique.
    
    \item The $g$-action permutes the cosets according to $g_i \mapsto g_{j(i)}$ .
    
    \item Moreover, on each subspace $g_i V$, $h_i$ acts by $\pi(h_i).$
\end{itemize}

\subsection{Restriction and induction for representations of \texorpdfstring{$\Gth \subset \SL$}{Gamma\_theta in SL2Z}}

How do these results apply to the case at hand? Both $\SL$ and $\Gth$ are infinite, non-compact groups. However, $\SLn$ and $\Gth/\Gamma(n)$ are finite groups for every (even) $n,$ and $\Gth/\Gamma(n) < \SLn$.

Given a $\SL$-representation $R$ of level $n$, we can restrict it to $\Gth$ straightforwardly. Denote the restricted representation by $R|_{\Gth}$. Since $\ker R \geq \Gamma(n)$, $\ker  R_{\Gth} \geq \Gth \cap \Gamma(n).$ Since $\Gamma(n) \subset \Gth$ for every even $n,$ $\Gth \cap \Gamma(n) = \Gamma(n)$ if $n$ is even and $\Gamma(2n)$ if $n$ is odd.  On the other hand, $\ker R|_{\Gth}$ cannot contain $\Gamma(n')$ for $n' < n$: if it were the case, we could think of $R|_{\Gth}$ as a representation of $\Gth/\Gamma(n')$ and induce it to a representation of ${\rm SL}(\ZZ_{n'})$. Since the induced representation contains $R$ by Frobenius reciprocity, $\ker R > \Gamma(n')$, but this contradicts the fact that $R$ is of level $n.$ Hence the level of $R|_{\Gth}$ is $n$ (if $n$ is even) or $2n$ (if $n$ is odd). 

Thus, every irrep of $\Gth/\Gamma(n)$ (where $n$ is always even) can be obtained from the decomposition into irreps of the restriction of irreps of $\SL$ and ${\rm SL} (\ZZ_{n/2}).$  In other words, every congruence irrep of $\Gth$ can be obtained from restricting and decomposing the congruence irreps of $\SL$. This is the key result which enables us to obtain the full list of congruence representations of $\Gth$ (up to a given dimension).

In order to facilitate the computation, we also make explicit use of induction. For simplicity, we shall speak of the induction from $\Gth$ to $\SL$ in the sequel, but technically this should always be understood as an induction from $\Gth/\Gamma(n)$ to $\SLn$, which are both finite groups. Note that $[\Gamma_{\theta}: \SL  ] = [\Gth/\Gamma(n):\SLn] = 3$ \cite{sl2zConrad}, so that if we start with a $d$-dimensional representation of $\Gth/\Gamma(n)$, the dimension of the induced representation is always $3d$. 


 A choice of left coset representatives of $\Gth \subset \SL$ is given by $1, t, st $ . Let us denote the $\Gth$-representation by $\rho: \Gth \ra {\rm GL}(V)$, and its induced representation by $R: \SL \ra {\rm GL} (W)$. If the action of $\rho$ on $V$ is represented by the matrices $\rho(s) = \frS$ and $\rho(t^2) = \frT^2$, then the action of $R$ on $W = V \oplus s V \oplus st V$ is given by the matrices $R(s) = \cS$ and $R(t) = \cT$ which take the block form
\begin{equation}
    \cS =  \begin{pmatrix} \frS & 0 &  0  \\ 0 & 0 & \frT \frS^2  \\ 0 & \mathds{1} & 0   \end{pmatrix}, \ \cT = \begin{pmatrix}0 & \frT^2 & 0 \\ \mathds{1} & 0 & 0   \\ 0 & 0 &  (\frS \frT^2)^{-1} \end{pmatrix}.
\end{equation}
This explicit form of the induced representation allows us to efficiently compute  $\Gth$-irreps of dimension $d$ coming from $\SL$-irreps of dimension $3d$, using the reverse induction formula detailed in appendix \ref{Sec:reverse_induction}.

The restriction to $\Gth$ means the generators are now 
\begin{equation}
  \cS =  \begin{pmatrix} \frS & 0 &  0  \\ 0 & 0 & \frS^2  \\ 0 & \mathds{1} & 0   \end{pmatrix}, \ \cT^2 = \begin{pmatrix} \frT^2 & 0 & 0 \\ 0 & \frT^2 & 0   \\ 0 & 0 &  \left((\frS \frT)^2 \right)^{-2} \end{pmatrix}.   
\end{equation}
Note that the restriction of the induced representation indeed contains the original $\Gth$-representation in the first block.

\subsection{Constraints from Frobenius reciprocity}

Frobenius reciprocity heavily constrains which $\SL$ irreps can give rise to a $\Gth$ irrep of a given dimension $d.$ Consider a $\Gth$ irrep $\rho$ of dimension $d.$ Its induce representations takes the form
\begin{equation}
    \operatorname{Ind} \rho = \bigoplus_i m_i R_i
\end{equation}
for some $\SL$ irreps $R_i$, which in turn satisfy
\begin{equation}
    \operatorname{Res} R_i = m_i \rho \oplus ...
\end{equation}
Since $\dim \operatorname{Ind} \rho = 3d$, $\operatorname{Ind}\rho$ (if it is not a $3d$-dimensional irrep) can only decompose as $3d = d + d + d$ (in which case $\rho$ is extendable), or $(d+a) + (2d-a)$ for some $0 \leq a  \leq  d.$ $a = d$ or $a = 0$ corresponds to the decomposition $d + 2d $, which makes $\rho$ extendable.  Since we are interested in non-extendable representations, we take the decomposition $3d = (d+a) + (2d - a)$ for $0< a < d.$ The $(d+a)$-dimensional irrep $R_{d+a}$ satisfies $\operatorname{Res} R_{d+a} = \rho \oplus \sigma \oplus \ldots $ where $\rho$ is $d$-dimensional, so $\sigma$ can  at most be  $a$-dimensional. $\operatorname{Ind} \sigma $ should in turn contain $R_{d+a}$, so we need 
\begin{equation}
    3a \geq \dim \operatorname{Ind} \sigma \geq \dim R_{d+a} = d+a.
\end{equation}
This translates to $ a \geq \frac{d}{2}$, or, 
\begin{equation}
    d +a \geq \frac{3}{2}d.
    \end{equation}
On the other hand, if $a > \frac{d}{2}$, $2d - a < \frac{3}{2}d$, so the $(2d-a)$-dimensional irrep $R_{2d-a}$ would not satisfy the above requirements. Hence we need exactly $a = \frac{d}{2}$, or, in other words,  
\begin{equation}
    d + a = \frac{3}{2}d.
\end{equation}

Thus, for any given $\Gth$-irrep $\rho$, either its induced representation is an $\SL$-irrep of dimension $3d$, or it decomposes into two $\SL$-irreps of dimension $\frac{3}{2}d$ each. When $d$ is odd, the latter possibility is precluded  as $\frac{3}{2}d$ is not an integer.

For example, if we are interested in obtaining $4$-dimensional $\Gth$-irreps, we should look at the restrictions of 
\begin{itemize}
    \item $4$-dimensional $\SL$-irreps (these give rise to extendable irreps)
    \item $6$-dimensional $\SL$-irreps
    \item $12$-dimensioanl $\SL$-irreps
\end{itemize}
and obtain their irreducible components.

\subsection{Computation of \texorpdfstring{$\Gth$}{Gth} irreps }

\begin{figure}
    \centering
    \begin{tikzpicture}[node distance=2cm, scale=0.5]
    
    \node (start) [startstop] {Start};
    
    \node (in1) [io, below of=start, xshift=-5cm] {$3d$-dim $\SLn$ irrep};
    
    \node (pro1) [process, below of=in1, yshift=-1.3cm] {Reverse induction};
    
    \node (in2) [io, below of=start] {$\frac{3}{2}d$-dim $\SLn$ irrep};
    
    \node (res) [process, below of=in2] {Restriction};
    
    \node (pro2) [process, below of=res] {Simultaneous BD};
    
    \node (in3) [io, below of=start, xshift=5cm] {$d$-dim $\SLn$ irrep};
    
    \node (res2) [process, below of=in3, yshift=-1.3cm] {Restriction};
    
    \node (dec1) [decision, below of=pro2, yshift=-1.2cm] {Cong. irrep?};
    
    \node (out1) [io, below of=dec1, yshift=-1.2cm] {$d$-dim $\Gth$ irrep};
    
    \node (stop) [startstop, below of=out1] {Stop};
    
    
    \draw [arrow] (start) -- (in1);
    
    \draw [arrow] (start) -- (in2);
    
    \draw [arrow] (start) -- (in3);
    
    \draw [arrow] (in1) -- (pro1);
    
    \draw [arrow] (pro1) -- (dec1);
    
    \draw [arrow] (in2) -- (res);
    
    \draw [arrow] (res) -- (pro2);
    
    \draw [arrow] (pro2) -- (dec1);
    
    \draw [arrow] (in3) -- (res2);
    
    \draw [arrow] (res2) -- (dec1);
    
    \draw [arrow] (dec1) -- node[anchor=east] {yes} (out1);
    
    \draw [arrow] (out1) -- (stop);
    
    \draw [arrow] (dec1.east) -- node[anchor=south] {no} ++(4cm,0) |- (stop);
    
    \end{tikzpicture}
    \caption{Flowchart for computation of $\Gth$ irreps.}
    \label{fig:GthRepFlowchart}
\end{figure}
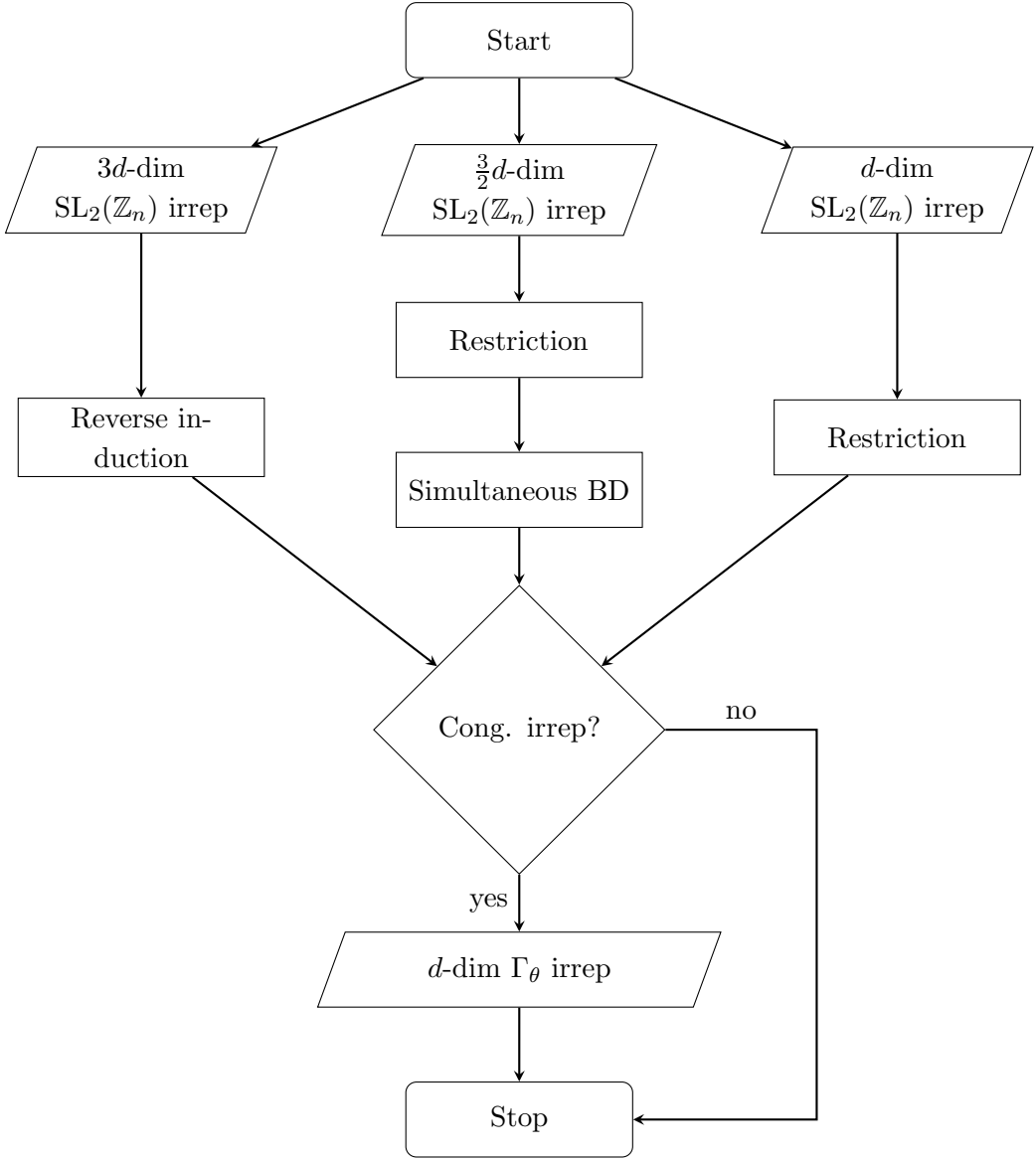

For a given dimension $d$, we look at $\frac{3}{2}d$-dimensional $\SL$ congruence irreps and $3d$-dimensional $\SL$ congruence irreps. In the case of $\frac{3}{2}d$-dimensional $\SL$-irreps, we directly restrict to $\Gth$, which in practice means looking at $T^2$ instead of $T$. Since $T^2$ is given in diagonal form, we apply an appropriate orthogonal transformation to the degenerate eigenspaces of $T^2$ in order to block-diagonalize $S$. If the block-diagonalization is successful, we obtain a $d$-dimensional irreducible component, which will be a $d$-dimensional congruence  irrep of $\Gth$.

For $3d$-dimensional $\SL$-irreps, we use the reverse induction formula, detailed in appendix \ref{Sec:reverse_induction}, to obtain a list of candidates. Then we check the congruence conditions, appendix \ref{Sec:Cong}, to obtain a final list of congruence irreps.

We compute all congruence irreps of $\Gth$ up to dimension $5$. The flowchart for computation of $\Gth$ irreps is shown in figure \ref{fig:GthRepFlowchart}.


\section{Construction of modular data from congruence representations of \texorpdfstring{$\Gth$}{Gamma\_theta}}
\label{Sec:Construction}

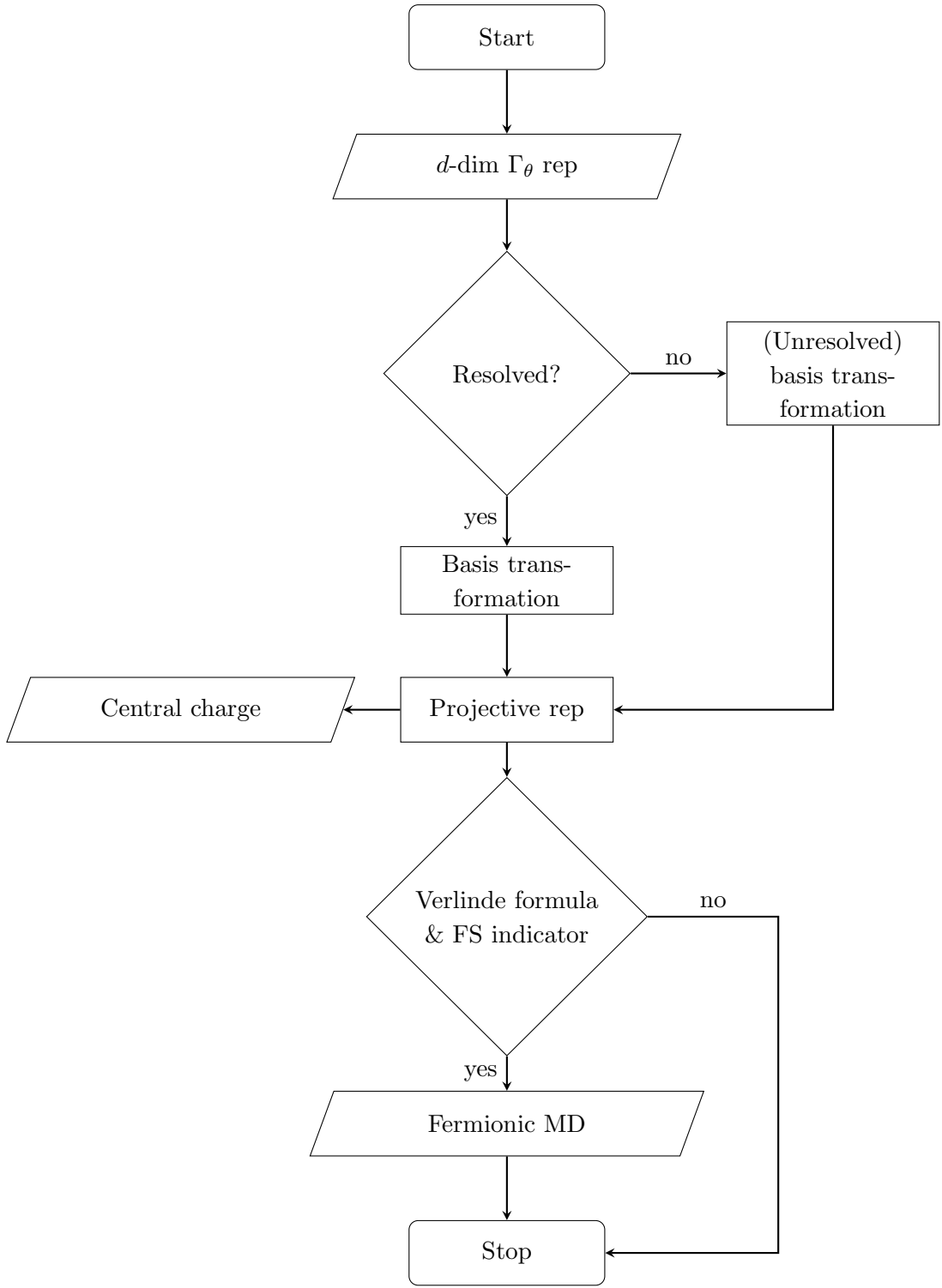
\begin{figure}
    \centering
    \begin{tikzpicture}[node distance=2cm, scale=0.5]
    \node (start) [startstop] {Start};
    
    \node (in) [io, below of=start] {$d$-dim $\Gth$ rep};
    
    \node (dec1) [decision, below of=in, yshift=-1.2cm] {Resolved?};
    
    \node (pro1) [process, below of=dec1, yshift=-1.2cm] {Basis transformation};
    
    \node (pro2) [process, below of=pro1] {Projective rep};
    
    \node (out1) [io, left of=pro2, xshift=-3cm] {Central charge};
    
    \node (dec2) [decision, below of=pro2, yshift=-1.2cm] {Verlinde formula \& FS indicator};
    
    \node (out2) [io, below of=dec2, yshift=-1.2cm] {Fermionic MD};
    
    \node (pro3) [process, right of=dec1, xshift=3cm] {(Unresolved) basis transformation};
    
    \node (stop) [startstop, below of=out2] {Stop};
    
    
    \draw[arrow] (start) -- (in);
    
    \draw[arrow] (in) -- (dec1);
    
    \draw[arrow] (dec1) -- node[anchor=east] {yes} (pro1);
    
    \draw[arrow] (dec1) -- node[anchor=south] {no} (pro3);
    
    \draw[arrow] (pro3) |- (pro2); 
    
    \draw[arrow] (pro1) -- (pro2);
    
    \draw[arrow] (pro2) -- (out1);
    
    \draw[arrow] (pro2) -- (dec2);
    
    \draw[arrow] (dec2) -- node[anchor=east] {yes} (out2);
    
    \draw[arrow] (out2) -- (stop);
    
    \draw[arrow] (dec2.east) -- node[anchor=south] {no} ++(4cm,0) |- (stop);
    
    \end{tikzpicture}
    \caption{Flowchart for construction of fermionic MD.}
    \label{fig:MDFlowchart}
\end{figure}

Once we have obtained a list of congruence irreps of $\Gth$ from congruence irreps of $\SL$, using methods outlined in section \ref{Sec:RepTheory}, we can use them to construct candidate MD. As in the bosonic case  \cite{Reconstruction}, we first  construct a $\Gth$-representation $\rho_{\rm isum}$ for a given dimension $d$ as a direct sum of $\Gth$-congruence irreps. Then, by applying an orthogonal matrix $U$ and a signed diagonal matrix $V$, we put it into a specific basis which makes it a candidate for a MD. After obtaining the list of candidates, we can check the necessary conditions for being a valid MD of an SMC, such as the Verlinde formula (eq.  \ref{eq:Verlinde}) and the Frobenius-Schur indicator condition (eq. \ref{eq:FS}). In this paper, we carry this program out up to dimension $5$ (which corresponds to SMCs of rank $10$). The flowchart for the the construction of fermionic MD from $\Gth$-representations is shown in figure \ref{fig:MDFlowchart}.

\subsection{Basis transformation and resolved representations}
\label{Sec:resolved}
The MD $(\hS, \ \hT^2)$ are basis-dependent quantities, and even if $(\hS, \ \hT^2)$ form a reducible representation, the corresponding modular category may be indecomposable. Thus, if we are interested in fermionic  modular  matrices $\hS, \ \hT^2$ of dimension $d$, we need to look at all $\Gth$-representations of dimension $d$ (including reducible ones), in all possible valid bases. Following ref. \cite{Reconstruction}, we denote by $\rho_{\rm isum}$ the direct sum of irreps in the basis coming from our list of symmetric $\Gth$-irreps (in their case $\rho_{\rm isum}$ denotes a direct sum of  $\SL$-irreps), and $\rho = U \rho_{\rm isum} U^{-1}$ the basis-changed version, which is a candidate for the MD (more precisely, $\rho$ will be a linear lift of the projective representation formed by $(\hS, \ \hT^2$). 

There are several conditions for a valid basis. 
\begin{itemize}
    \item The first is that $\rho(t^2)$ should be diagonal. As our $\Gth$-irrep data is all in this form, this condition is automatically satisfied by any $\rho_{\rm isum}$. In order to preserve this under a transformation $U \rho_{\rm isum} U^{-1}$, $U$ can only act block-diagonally, where each block corresponds to a degenerate subspace of the eigenvalues of $\rho(t^2).$

\item We also require that $\rho(s)$ is symmetric. As $\rho_{\rm isum}$ is always symmetric, we need $U$ to be a combination of signed diagonal matrices $V$ and an orthogonal matrix $U_0$ (where the orthogonal matrix will be block-diagonal, by the above condition), $U = V U_0.$ 

\item Another condition is that $\rho(s)$ should not have zeroes in the first row (or, equivalently, the column, since it is symmetric), i.e. $\rho(s)_{1 i } \neq 0 \  \forall \ i$, corresponding to the fact that quantum dimensions cannot be zero. This leads to the  $t^2$-spectrum  condition, which states that whenever $\rho_{\rm isum}$ is a direct sum, the $t^2$-spectrum (the set of eigevnalues of $\rho(t^2)_{\rm summand}$) of each direct summand should have nonempty overlap. 
See Theorem 3.18 of \cite{MTCbyRank2015} for a proof. (They deal with $\SL$-representations and hence the $t$-spectrum, rather than the $t^2$-spectrum, but the idea is the same.) 
\end{itemize}

Accordingly, for a given dimension, we build direct sums of $\Gth$-irrpes and organize them into types, according to how much overlap their $t$-spectra have. For example, for dimension $5$, we consider the following types of representations $\rho_{\rm isum}$:
\begin{itemize}
    \item $5$d irreps
    \item $4+1$ Type (2)
    \item $3+2$ Type (2)
    \item $3+2$ Type (2, 2)
    \item $3+1+1$ Type (2, 2)
    \item $3+1+1$ Type (3)
    \item $2+2+1$ Type (3)
    \item $2+2+1$ Type (3, 2)
    \item $2+1+1+1$ Type (4)
\end{itemize}
where Type $(a,\ b)$ denotes that the eigenvalues of $\rho(t^2)$ overlap in sets of sizes $a$ and $b$. For example, if the five eigenvalues are $\{ e^{2 \pi i \frac{1}{3}}, e^{2 \pi i \frac{1}{3}}, e^{2 \pi i \frac{1}{3}}, e^{2 \pi i \frac{1}{8}}, e^{2 \pi i \frac{1}{8}} \}$, the representation is of Type (3, 2). In dimension $5,$ $1+1+1+1+1$ Type (5) does not yield any valid MD. We prove in appendix \ref{Sec:5overlaps} that a direct sum of $1$-dimensional representations can only give rise to split SMCs.

When $\rho_{\rm isum}$ is irreducible  the $t^2$-spectrum is non-degenerate (this holds at least up to dimension $5$), so there is no further possibility of orthogonal transformation $U_0$ available. In such a case we simply perform all possible signed diagonal transformations, $V \rho_{\rm isum} V^{-1}$, which gives us the candidate $\rho$.

When $\rho_{\rm isum}$ is a direct sum, the $t^2$-spectrum is degenerate, and in each degenerate eigenspace of dimension $d_\theta$ (corresponding to the topological spin $\theta$) we can perform an orthogonal transformation of dimension $d_\theta$. 

The possible orthogonal transformations are in fact heavily constrained for the so-called `resolved representations' \cite{Reconstruction}, for which the degenerate eigenspace can be `resolved' (i.e. the degeneracy lifted) by the set of matrices 
\begin{equation}
    H(a) \equiv \rho(s)^2 \rho(t)^{-(a-1)} \rho(s) (\rho(t)^2 \rho(s))^{\bar{a} - 1} \rho(t)^{-(a-1)} \rho(s)
    \label{eq:Hmatrix}
\end{equation}
where $a$ is an element of $\ZZ_n^\times$ which satisfies $\theta^{a^2} = \theta$. Here, $n$ is the level of the representation $\rho$, and $\bar{a}$ is the inverse of $a$ modulo $n$, i.e. $a \bar{a} \equiv 1 \mod n.$ Due to a theorem related to Galois conjugation \cite{Cong_CFT}, each $H(a)$ should be a signed permutation for a valid modular representation $\rho$. The theorem is proved for the bosonic case, but assuming the existence of a minimal modular extension, we can prove the same for fermionic theories, and hence this also applies to our case. The fact that $H(a)$ has to be a signed permutation matrix after the orthogonal transformation $U_0$ places severe constraints on what $U_0$ can be for resolved representations. See section C.1 of ref. \cite{Reconstruction}. We apply their logic to the fermionic case, and find that we need only consider the following orthogonal transformations for the resolved degenerate eigenspaces (as mentioned above and in appendix \ref{Sec:5overlaps}, we need not consider a $5$-dimensional eigenspace):
\begin{itemize}
        \item For a $2$-dimensional subspace,
        \begin{equation}
        \begin{pmatrix} \cos \phi & -\sin \phi \\ \sin \phi & \cos \phi \end{pmatrix}
    \end{equation}
    with $\phi = 0, \  \pi/4, \ -\pi/4$.
    
    \item For a $3$-dimensional subspace,     \begin{equation}
        \begin{pmatrix} \cos \phi & -\sin \phi  & 0 \\ \sin \phi & \cos \phi  & 0 \\ 0 & 0 & 1\end{pmatrix}, \ 
        \begin{pmatrix} \cos \phi &0 &  -\sin \phi \\ 0 & 1 & 0 \\\sin \phi & 0 & \cos \phi \end{pmatrix}, \ 
        \begin{pmatrix} 1 & 0 & 0 \\ 0& \cos \phi & -\sin \phi \\ 0& \sin \phi & \cos \phi \end{pmatrix} 
    \end{equation}
with $\phi = 0, \ \pi/4, \ -\pi/4.$

\item For a $4$-dimensional subspace,
      \begin{align}
            &\begin{pmatrix} \cos \phi & -\sin \phi  & 0  & 0 \\ \sin \phi & \cos \phi  & 0 & 0 \\ 0 & 0 & 1 & 0 \\ 0 & 0& 0& 1 
        \end{pmatrix},  \ 
        \begin{pmatrix} \cos \phi &0 &  -\sin \phi & 0  \\ 0 & 1 & 0 & 0 \\\sin \phi & 0 & \cos \phi & 0 &  \\ 0 & 0 & 0& 1 \end{pmatrix},  \ 
        \begin{pmatrix} 1 & 0 & 0 & 0 \\ 0& \cos \phi & -\sin \phi & 0  \\ 0& \sin \phi & \cos \phi  & 0 \\ 0 & 0 & 0& 1\end{pmatrix}, \nonumber \\  
        &\begin{pmatrix} \cos \phi & 0 & 0 & -\sin \phi   \\ 0& 1 & 0 & 0 \\ 0& 0& 1 & 0 \\ \sin \phi & 0 & 0 & \cos \phi \end{pmatrix},  \ 
        \begin{pmatrix} 1 & 0 & 0 & 0 \\ 0& \cos \phi &0 &  -\sin \phi \\ 0 & 0& 1 & 0 \\0& \sin \phi & 0 & \cos \phi \end{pmatrix}, 
        \ \begin{pmatrix} 1 & 0 & 0 & 0 \\ 0& 1 & 0 & 0 \\ 0& 0& \cos \phi & -\sin \phi \\ 0& 0& \sin \phi & \cos \phi \end{pmatrix} 
          \end{align}
          with $\phi = 0, \pi/4, -\pi/4.$
\end{itemize}

Hence, for resolved representations, there are only a discrete set of possible candidates for MD. The vast majority of known valid MD come from resolved representations -- in fact, up to rank $8$, for which there is a more or less complete classification for unitary SMCs, all but one of them come from resolved representations (the one exception corresponds to the toric code stacked with the trivial fermionic theory, $4^B_0 \boxtimes \cF_0$). For rank $10$, a few of the known unitary SMCs are obtained from unresolved representations. We discuss how we obtained them, as well as their non-unitary versions, and our general (though incomplete) strategy for dealing with unresolved representations, in section \ref{Sec:unres}.


\subsection{From linear to projective representations and modular data}
\label{Sec:lifting}

Once we obtain the candidate linear representations $\rho = U \rho_{\rm isum} U^{-1},$ we can easily construct the MD by
\begin{equation}
    \hS = \frac{|\rho(s)_{11}|}{\rho(s)_{11}} \rho(s), \ \hT^2 = \frac{\rho(t^2)}{\rho(t^2)_{11}}
    \label{eq:projectivize}
\end{equation}
where the vacuum corresponds to the first index. $\hS, \ \hT^2$ now only satisfy the relations of the congruence representation projectively, and the level may change. 

$\rho$ may be thought of as a lift of the projective representation formed by $\hS, \ \hT^2$ to a linear representation. If every projective representation of $\Gth$ formed by MD admits such a linear lift, then we can claim that our search for MD is complete, since we begin with a complete list of linear representations of a given dimension. For $\SL$, the existence of linear lifts of the projective representations formed by bosonic MD is guaranteed by Theorem II of ref. \cite{Cong_CFT}. We now prove a similar theorem for fermionic MD and $\Gth$.
\begin{theorem}
\label{Theorem:linear_lift}
Suppose $\tilde{\rho}$ is a projective representation of $\Gth$ formed by the fermionic MD $(\hS, \ \hT^2)$ of an SMC $\cB$, i.e. $\tilde{\rho}(s) = \hS, \ \tilde{\rho}(t^2) = \hT^2$. Then, $\tilde{\rho}$ always admits a  lift to a linear congruence representation of $\Gamma_\theta$.
\end{theorem}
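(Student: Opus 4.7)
My plan is to reduce to the bosonic case (Theorem~II of \cite{Cong_CFT}) by invoking a minimal modular extension of $\cB$, which is assumed to exist throughout this section. Let $\cC$ be such an extension; then $\cC$ is an MTC containing $\cB$ as its sub-braided-fusion category of $f$-local anyons. Applying the bosonic theorem to $\cC$ yields a linear congruence lift $\Phi : \SL \to \mathrm{GL}(V_\cC)$ of the projective representation formed by $(S_\cC, T_\cC)$, with $\Phi(s) = \alpha\, S_\cC$ and $\Phi(t) = \beta\, T_\cC$ for appropriate scalars $\alpha, \beta$. Here $V_\cC$ denotes the character space of $\cC$, spanned by $\{\chi_x\}_{x \in \cC}$.

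Next, I would restrict $\Phi$ to $\Gth$ and isolate a $\Gth$-invariant subspace on which the restriction realises the desired lift. Consider the subspace spanned by ``NS super-characters'',
\begin{equation*}
V_{\mathrm{NS}} = \mathrm{span}\Bigl\{ \phi_i := \tfrac{1}{\sqrt{2}}\bigl(\chi_{a_i} + \chi_{a_i \times f}\bigr) \Bigr\} \subset V_\cC,
\end{equation*}
indexed by representatives $a_i$ of the $f$-orbits in $\cB$. Since $\theta_{a_i \times f} = -\theta_{a_i}$, one has $T_\cC^2 \phi_i = \theta_{a_i}^2 \phi_i$, so $T_\cC^2$ preserves $V_{\mathrm{NS}}$. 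Using the identity $S_{af, b} = M_{fb}\, S_{a,b}$ (valid for $f$-local $a$), the tensor decomposition (\ref{eq:tensor_matrices}), and the rescaling $S_\cC|_{\mathrm{NS}} = \tfrac{1}{\sqrt{2}} S_\cB$ (reflecting $D_\cC^2 = 2 D_\cB^2$), I would check that $S_\cC \phi_i = \sum_j \hS_{ij}\, \phi_j$, with contributions from $b$ in the Ramond sector cancelling because $M_{fb} = -1$ there. Hence $V_{\mathrm{NS}}$ is $\Gth$-stable, and in the basis $\{\phi_i\}$ the restriction of $\Phi|_{\Gth}$ to $V_{\mathrm{NS}}$ is given by the matrices $\alpha\, \hS$ and $\beta^2\, \hT^2$ --- manifestly a genuine linear representation of $\Gth$ and a scalar rescaling of $\tilde\rho$, which furnishes the desired lift $\rho$.

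The congruence property of $\rho$ is then inherited from $\Phi$: if $\ker\Phi \supset \Gamma(N)$, then $\ker \rho \supset \Gamma(N) \cap \Gth$, which equals $\Gamma(N)$ when $N$ is even and contains $\Gamma(2N)$ otherwise --- in either case a principal congruence subgroup sitting inside $\Gth$. The step I expect to be most delicate is the $\Gth$-stability of $V_{\mathrm{NS}}$ together with the exact identification of the induced action with $\hS$ and $\hT^2$; this rests on the identity $S_{af, b} = M_{fb}\, S_{a, b}$, which in turn follows from (\ref{eq:balancing}) combined with the spin identity $\theta_{af} = -\theta_a M_{fa}$. A minor subtlety arises if the NS sector of $\cC$ contains $f$-fixed simple objects, in which case the super-character definition needs a slight modification (collapsing the corresponding pair to a single basis vector), but this does not affect the overall strategy.
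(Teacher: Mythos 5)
Your proposal is correct and follows essentially the same route as the paper: pass to the minimal modular extension $\cC$, apply Theorem II of \cite{Cong_CFT} to obtain a linear congruence lift $\Phi$ of $(S,\,T)$, restrict to $\Gth$, extract $\tilde\rho$ (up to scalars) as a direct summand, and inherit the congruence property from $\ker\Phi\cap\Gth \supseteq \Gamma(n)\cap\Gth$. The only difference is cosmetic --- your explicit NS super-character computation re-derives the block-diagonal basis that the paper simply cites from section 3.1 of \cite{Cong_Subgroup_Super_Modular}, and your caveat about $f$-fixed NS objects is vacuous, since any simple $x$ with $x\times f\cong x$ satisfies $M_{fx}=-1$ and therefore lies outside the NS sector.
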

\begin{proof}
By assumption, $\cB$ admits a minimal modular extension $\cC$. According to section 3.1 of ref. \cite{Cong_Subgroup_Super_Modular}, we can choose a particular basis so that $S$ and  $T^2$ of $\cC$ take the block-diagonal form
\begin{equation}
    S = \begin{pmatrix} \hS & 0 & 0 & 0 & 0 \\ 0 & 0& 2A & X & 0 \\ 0 & 2A^T & 0 & 0& 0 \\ 0 & 2X^T & 0 & 0 &0 \\ 0 &0 & 0 & 0& B  \end{pmatrix}, \quad T^2  = \begin{pmatrix} \hT^2 & 0 & 0 & 0 & 0 \\ 0 & \hT^2&  0& 0 & 0 \\ 0 & 0 & \hT_v^2 & 0& 0 \\ 0 & 0 & 0 & T^2_\sigma &0 \\ 0 &0 & 0 & 0& \hT_v^2  \end{pmatrix}
\end{equation}
where $\hS$ and $\hT^2$ are the modular matrices of $\cB$ (the other matrices such as $A$ and $X$ are not relevant for our purposes). This means that the projective $\SL$-representation formed by $S$ and $T$, after restriction to $\Gth$, becomes reducible (we are here interested only in representation-theoretic properties and are thus free to choose a basis).  More precisely, if we denote by $\tilde{\Phi}$ the projective $\SL$-representation formed by $S$ and $T$, we have
\begin{equation}
    \tilde{\Phi}|_{\Gth} = \tilde{\rho} \oplus \tilde{\Phi}'
\end{equation}
where $\tilde{\Phi}'$ is the remaining part. 

Let $N = \operatorname{ord} T.$ By Theorem II of ref. \cite{Cong_CFT}, $\tilde{\Phi}$ always admits a lift to a linear congruence representation $\Phi$ of level $n$ such that $N | n | 12 N$, which takes the form
\begin{align}
    &\Phi(s) = \tilde{\Phi}(s) \nonumber \\ 
    &\Phi(t) = e^{- 2 \pi i c/24} \tilde{\Phi}(t)
\end{align}
where $c$ is determined modulo $8$ by $\tilde{\Phi}$. (We can also always take the tensor product of this linear representation with $1$-dimensional representations of $\SL$, but this does not affect our argument.) The restriction of $\Phi$ to $\Gth$ then takes the form
\begin{align}
    &\Phi|_{\Gth}(s) = \hS \oplus ... \nonumber \\ 
    &\Phi|_{\Gth}(t^2) = e^{- 2 \pi i c/12} \hT^2 \oplus ... .
\end{align}
Since $\Phi|_{\Gth}$ is a linear representation of $\Gth$, its direct summand $\rho$ given by \begin{align}
\label{eq:linear_lift_c}
    &\rho(s) = \hS \nonumber \\ 
    &\rho(t) = e^{- 2 \pi i c/12} \hT^2
\end{align}
must also be a linear representation of $\Gth$.  Moreover, since the level of $\Phi$ is $n$, we note that 
\begin{equation}
    \ker \rho \geq \ker \Phi|_{\Gth} \geq \Gth \cap \Gamma(n). 
\end{equation}
Since $N$ is always even (since the list of simple objects of $\cC$ includes fermions, of spin $\frac{1}{2}$), $n$ is also even, so $\Gth \cap \Gamma(n) = \Gamma(n)$. Thus we have
\begin{equation}
    \ker \rho \geq \Gamma(n),
\end{equation}
i.e. the linear lift $\rho$ obtained by attaching a phase $e^{-2 \pi i c/12}$ to $\hT$, where $c$ is the central charge of one of the modular extensions, is congruence.

\end{proof}

Once we obtain the candidate $(\hS, \ \hT^2)$ via eq. \ref{eq:projectivize}, we check whether they are valid via the Verlinde formula (eq. \ref{eq:Verlinde}) and the Frobenius-Schur indicator condition (eq. \ref{eq:FS}). The results of this classification are summarized in Section \ref{Sec:Results} and appendix \ref{Sec:Table}.

\subsection{Central charge from linear representations}
\label{Sec:central_charge}

A strength of our approach is that we can determine the central charge of the resulting SMC, which is defined modulo $\frac{1}{2}$. For bosonic MTCs, whose central charge is defined modulo $8$, the approach of congruence representations confers no additional advantage as it is straightforward to determine the central charge from the modular matrices $S$ and $T$ via $(ST)^3 = e^{ 2\pi i c/ 8} S^2$. In the fermionic case, where $S$ is degenerate, and $\hS,\ \hT^2$ only form a projective representation of $\Gth$ rather than $\SL$, it is impossible to determine $c$ from the given MD by themselves using only the group relations of $\Gth$. Rather, for SMCs, $c$ is defined in terms of  the central charge of the modular extensions \cite{Wen2015Fermion}. While the central charge of each modular extension is defined modulo $8$, there are $16$ different modular extensions for a given SMC (as a consequence of Theorem 5.4 of ref. \cite{Lan_Kong_Wen_2016}) with their central charges differing by multiples of $\frac{1}{2}$ \cite{FermionicModular16}, so $c$ is defined modulo $\frac{1}{2}$ for an SMC. This means that, in order to compute the central charge of an SMC, we first need to compute (one of) the modular extensions. The modular extensions are bosonic MTCs of much higher rank (see Lemma 4.2 of \cite{ClassificationRank6} for an explicit bound on the rank), and their computation is a highly nontrivial task.

Our approach, which begins first with linear representations and then constructs the projective representations, allows us to determine the central charge of the SMCs we obtain without having to compute their modular extensions. The key idea is that, by  eq. \ref{eq:linear_lift_c}, the central charge of the modular extensions is involved in the lift of the fermionic MD to a linear representation. 

For each MD $(\hS, \ \hT^2)$, if $\rho(t) := e^{-2 \pi i \frac{c}{12}} \hT^2$ furnishes a linear lift, then $e^{ -2 \pi i \frac{c + m/2}{12}} \hT^2$ also furnishes a linear lift. Hence there are at least $24$ different linear representations (up to tensor product with $1$-dimensional representations, which does not affect the central charge) for a given MD. 


In our classification process, we start with a complete list of linear representations which can potentially yield valid MD. Thus, our list of linear representations must include these linear lifts coming from the existence of minimal modular extensions.  This means that, for every projective representation formed by  a given MD,   there are at least $24$ different linear representations which all lead to it. If there are exactly $24$, their $c$ should differ by multiples of $\frac{1}{2}$, and this fixes the $c$ of the SMC modulo $\frac{1}{2}$.

More concretely: consider a particular  pair $(\hS,\ \hT^2)$. We keep track of which linear representations $\rho_{\alpha}$ gave rise to this MD. These $\rho_{\alpha}$ differ from one another by a phase of $\rho_{\alpha}(t^2).$ If we find that there are 24 such $\rho_{\alpha}$ with $\rho_{\alpha}(t^2) = e^{2 \pi i \alpha/12} \rho_0(t^2)$, where $\rho_0$ is a chosen  reference representation, and $\alpha$ come in steps of $\frac{1}{2}$, then we can fix $c$ modulo $1/2$. For every MD we obtain, this has been the case, enabling us to determine $c$ modulo $\frac{1}{2}$.



\subsection{Unresolved representations}
\label{Sec:unres}
In Section \ref{Sec:resolved}, we saw that the possible orthogonal basis transformations $U_0$ are constrained to a finite set for resolved representations. For unresolved representations, we have a continuum of potential orthogonal transformations we need to apply and then check. 

In practice, however, all known cases are obtained from $\pi/4$ or $-\pi/4$ rotations, and we expect all valid MD will be obtained from orthogonal transformations involving simple angles, since the resulting $\hS$-matrix must solve the Verlinde formula condition that $\hS$ must diagonalize the fusion matrices, which are integer matrices. Hence, we check the following angles for unresolved Type (2), Type (2, 2), and Type (3) representations: $\pi/4, -\pi/4, \pi/6, -\pi/6, \pi/3, -\pi/3$. For Type (3), we allow the combination of all such rotations along all three axes of rotation. For Type (3, 2) and Type (4) representations, we only consider $\pi/4, -\pi/4$ rotations, but again along every possible axes of rotation. 

We find that, among dimension $5$ reducible representations, those  of $2+2+1$ Type (2, 2) and $2+1+1+1$ Type (4) yield valid MD. Among dimension $4$ reducible representations, those of $1+1+1+1$ Type (4) yield valid MD.   Every unitary MD obtained this way had previously been obtained in ref. \cite{Wen2015Fermion}, though we also obtain the non-unitary MD with the same fusion rules. In every case, the valid MD is obtained from a $\pi/4$ or $-\pi/4$ orthogonal transformation (for Type (4), we make two such orthogonal transformations along different axes). 

For unresolved Type (2) representations, there is only a single parameter $\phi$ for the orthogonal transformation, since we only have a rotation matrix on a $2$-dimensional subspace. In this case, we use Mathematica to directly solve for this unknown parameter given that all fusion coefficients (which depend on $\phi$) must be non-negative integers. We find that there is no solution, meaning we can definitively claim that  unresolved Type (2) representations do not yield valid MD. (Technically, because of numerical issues we need to specify some upper bound for the fusion coefficients, and in this case we only check up to $\hat{N}_{ij}^{ \ \ k} \leq 7$. However, we believe this should be sufficient, since the largest known fusion coefficient from valid MD is $4$.) 

Thus, our classification is complete for Type (2) unresolved representations (with the bound $\hat{N}_{ij}^{ \ \ k} \leq 7$); for other types, our classification may be incomplete and there may exist valid MD we have  missed. However, up to rank $10$, only a small minority of known MD come from unresolved representations, and in those few cases they are all obtained by orthogonal transformations involving only the angles $\pi/4$ or $-\pi/4$, so in practice it is unlikely that we have missed very many.

\section{Classification of fermionic modular data up to rank 10 \label{Sec:Results}}

Our results are tabulated in appendix \ref{Sec:Table}. Let us compare our results to previous results in the literature. 
\begin{itemize}
     \item Previous results (for any rank) were limited to  unitary MD, but we obtain both unitary and  non-unitary MD. We find that for every non-abelian fusion rule, there are both unitary and non-unitary MD realizing it (for comparison, in the  bosonic case, every non-abelian fusion rule up to rank $5$  has both a  unitary and non-unitary realization, but at rank $6$ there is a fusion rule which only admits a non-unitary realization \cite{rank6nonU, Reconstruction}). We expect that the non-unitary MD are related by Galois conjugation to the unitary MD. 
     
     \item In the unitary case, we recover all previously known MD \cite{Wen2015Fermion}. 
     
     \item In addition, we obtain two completely new fusion rules of rank $10$, and unitary and non-unitary MD realizing them. The unitary MD for these fusion rules have total quantum dimensions $D^2 = 472.379$ and $D^2 = 475.151$, which are much larger than any previously known total quantum dimension for rank $10$. Both of these are non-abelian and primitive, and do not fall into (the fermion condensation of) any known series of MTCs \cite{MTC_VOA}. These involve largest fusion coefficient $\hat{N}_{ij}^{ \ \ k} =  3$ and $4$, respectively. For comparison, all previously known examples of rank $10$ MD had the bound $\hat{N}_{ij}^{ \ \ k} \leq 2.$ Since our classification is in terms of the MD $(\hS, \ \hT^2)$ of the fermionic quotient, and hence the fusion rules $\hat{N}_{ij}^{ \ \ k}$ of the quotient, we need to check that full fusion rules $N_{ij}^{ \ \ k}$ exist which satisfy eqs. \ref{eq:fusion_quotient} and \ref{eq:balancing}. For these new fusion rules, we do find that such $N_{ij}^{ \ \ k}$ exist.

    \item The classification of rank $8$ fusion rules by ref. \cite{ClassificationRank8} had to place some bounds on the fusion coefficients in certain cases, though the bounds are very generous ($\hat{N}_{ij}^{ \ \ k} \leq 14$ or $\hat{N}_{ij}^{ \ \ k} \leq 21$). Our method places no such bound on the fusion coefficients, and yet do not find new fusion rules of rank $8$;  this is evidence for the results of ref. \cite{ClassificationRank8} being complete. 

    \item The arXiv version of ref. \cite{Wen2015Fermion} listed some MD which did not have valid modular extensions, colored in red. These do not appear in our classification as they do not form congruence representations. Our method automatically excludes such spurious MD without having to independently check the existence of modular extensions.

    \item As noted in section \ref{Sec:central_charge}, we find central charges modulo $\frac{1}{2}$ for every MD we obtain. Previously, the central charge data was missing for several MD of rank $8$ and rank $10$ in ref. \cite{Wen2015Fermion}.

\end{itemize}

We present the explicit MD of the two new classes  of rank-10 MD we have found. We show only one representative from each class. Here, $\chi^m_n = m + \sqrt{n}$.

\begin{equation}
\begin{aligned}
    \hS &= \frac{1}{\sqrt{30\chi^4_{15}}} \mqty(1 & \chi^4_{15} & \chi^5_{15} & \chi^3_{15} & \chi^3_{15} \\ \chi^4_{15} & 1 & \chi^5_{15} & -\chi^3_{15} & -\chi^3_{15} \\ \chi^5_{15} & \chi^5_{15} & -\chi^5_{15} & 0 & 0 \\ \chi^3_{15} & -\chi^3_{15} & 0 & \frac{1}{2}\chi^1_5\chi^3_{15} & -\frac{2\sqrt{30\chi^4_{15}}}{\chi^5_5} \\ \chi^3_{15} & -\chi^3_{15} & 0 & -\frac{2\sqrt{30\chi^4_{15}}}{\chi^5_5} & \frac{1}{2}\chi^1_5\chi^3_{15}), \\
    \hT^2 &= \mqty(\dmat{1,1,e^{2i\pi/3},e^{4i\pi/5},e^{-4i\pi/5}}), \quad D^2 = 472.379, \quad c = 0.
\end{aligned}
\end{equation}
\begin{equation}
\begin{aligned}
    \hS &= \frac{1}{2\sqrt{6\chi^5_{24}}} \mqty(1 & \chi^5_{24} & \chi^3_{6} & \chi^3_{6} & \chi^4_{24} \\ \chi^5_{24} & 1 & \chi^3_{6} & \chi^3_{6} & -\chi^4_{24} \\ \chi^3_{6} & \chi^3_{6} & -\chi^3_{6}-i\sqrt{6\chi^5_{24}} & -\chi^3_{6}+i\sqrt{6\chi^5_{24}} & 0 \\ \chi^3_{6} & \chi^3_{6} & -\chi^3_{6}+i\sqrt{6\chi^5_{24}} & -\chi^3_{6}-i\sqrt{6\chi^5_{24}} & 0 \\ \chi^4_{24} & -\chi^4_{24} & 0 & 0 & \chi^4_{24}), \\
    \hT^2 &= \mqty(\dmat{1,1,-1,-1,e^{2i\pi/3}}), \quad D^2 = 475.151, \quad c = 0.
\end{aligned}
\end{equation}

\section{Conclusion and outlook}

In this paper, we have detailed a procedure to classify the MD of SMCs using congruence representations, and have provided  a full classification of both unitary and non-unitary MD up to rank $10$. The classification is complete up to potential new MD coming from unresolved representations. Our result includes every unitary MD hitherto obtained, and also includes non-unitary MD. We also find new primitive MD of rank $10$ with completely new fusion rules.

The generalization to higher rank should be straightforward. There are several advantages to our approach compared to other approaches \cite{Wen2015Fermion, ClassificationRank6, ClassificationRank8}: (1) we do not need to place any bound on either the fusion coefficients or the total quantum dimension, (2) by treating unitary and non-unitary MD on an equal footing, we can easily obtain non-unitary as well as unitary MD, (3) we can determine the central charge without having to compute the modular extensions explicitly (4) spurious MD which do not admit a modular extension (arXiv version of ref. \cite{Wen2015Fermion}) are automatically excluded. Another advantage is that it allows to focus on non-split SMCs: in this paper we have included split SMCs as well as non-split SMCs to illustrate the power of this approach, but a classification of split SMCs are redundant since they follow trivially from the classification of MTCs. By excluding representations $\rho_{\rm isum}$ which are projectively extendable, we can automatically get rid of this redundancy and obtain a only the MD of non-split SMCs.  
A weakness of our approach is that it is difficult to handle unresolved representations.  In practice, a judicious choice of orthogonal transformations allows us to obtain some  valid MD even from unresolved representations, but we cannot claim the completeness of our classification. We leave a complete treatment of unresolved representations  to a future work.

Congruence representations also appear in rational conformal field theories (RCFTs): it is known that characters of RCFTs  transform as representations of $\SL$, and that these representations are congruence \cite{Cong_CFT}. Ref. \cite{Holomorphic_Modular} has used this idea to constrain the exponents of characters of bosonic RCFTs. Since the $\rm{NS}$-sector characters of a fermionic RCFT transforms as a representation of $\Gth$ \cite{bae2021fermionic}, congruence representations of $\Gth$ can readily be applied to fermionic RCFTs as well. Indeed, the NS-sector exponents of known fermionic characters \cite{bae2021fermionic, Bae_2022_fermionic_three} match the constraints obtained from our list of $\Gth$-congruence representations. A similar procedure can be applied to the R-sector, using a different subgroup of $\SL$. We will report our results on both the NS- and the R-sectors in an upcoming paper \cite{rcft_ours}.

\acknowledgments
We are grateful to Sungjay Lee for valuable discussions.
H.K., D.S. and G.Y.C. are supported by Samsung Science and Technology Foundation under Project Number SSTF-BA2002-05. H.K. is also supported by the National Research Foundation of Korea (NRF) grant funded by the Korea government (MSIT) (No. 2018R1D1A1B07042934).
M.Y. is supported under the YST program at the APCTP, which is funded through the Science and Technology Promotion Fund and Lottery Fund of the Korean Government, and also supported by
the Korean Local Governments of Gyeongsangbuk-do Province and Pohang City. G.Y.C is supported by the NRF of Korea (Grant No. 2020R1C1C1006048) funded by the Korean Government (MSIT) as well as the Institute of Basic Science under project code IBS-R014-D1. G.Y.C. is also supported by the Air Force Office of Scientific Research under Award No. FA2386-22-1-4061.

\bibliographystyle{JHEP}
\bibliography{ref}

\appendix

\section{Reverse induction formula \label{Sec:reverse_induction}}

Suppose we have a $3d$-dimensional symmetric $\SL$-rep given by $\qty(\cS, \cT)$. We assume that the spectrum of $T$ is non-degenerate. If the representation is an induced representation of some $d$-dimensional $\Gth$-rep given by $(\frS,\frT^2)$, then there exists a $3d \times 3d$ unitary matrix $\cU$ such that
\begin{equation} \label{eq:induced}
    \cU \cS \cU^{-1} = \mqty(\frS & 0 & 0 \\ 0 & 0 & \frS^2 \\ 0 & \mathds{1} & 0), \quad \cU \cT \cU^{-1} = \mqty(0 & \frT^2 & 0 \\ \mathds{1} & 0 & 0 \\ 0 & 0 & (\frS\frT^2)^{-1}).
\end{equation}
To find such $\cU$, first we re-arrange $T$ via a permutation matrix $P$ so that
\begin{equation} \label{eq:reverse_induction_permutation}
    P \cT P^{-1} = \mqty(\dmat{-\frT, \frT, \frT'}).
\end{equation}
Second, we introduce
\begin{equation} \label{eq:reverse_induction}
    U = \mqty(-\mathds{1} & \mathds{1} & 0 \\ \frT^{-1} & \frT^{-1} & 0 \\ 0 & 0 & C)
\end{equation}
where $C\frT'C^{-1} = (\frS\frT^2)^{-1}$, then eq.~(\ref{eq:induced}) is satisfied for $\cU = UP$. Note that the $U$ (\ref{eq:reverse_induction}) confines $\frS$ to a symmetric matrix. In addition, we have freedom of signed diagonal conjugation before conjugating with $U$. We denote the signed diagonal matrix by $D$. As a result, the transformation (\ref{eq:induced}) can be implemented by $\cU = UDP$. We can obtain a $d$-dimensional $\Gth$-rep given by $(\frS,\frT^2)$ from $3d$-dimensional $(\cS,\cT)$ via
\begin{equation} \label{eq:reverse_induction_long}
    \cU \cS \cU^{-1} = \mqty(\frS & 0 & 0 \\ 0 & 0 & \frS^2 \\ 0 & \mathds{1} & 0), \quad P \cT P^{-1} = \mqty(\dmat{-\frT, \frT, \frT'}).
\end{equation}

To efficiently implement above formula on a computer, we further simplify the procedure. Let 
\begin{equation}
    P\cS P^{-1} = \mqty(\cS_{11} & \cS_{12} & \cS_{13} \\ \cS_{21} & \cS_{22} & \cS_{23} \\ \cS_{31} & \cS_{32} & \cS_{33}), \quad D = \operatorname{diag}(a,b,c)
\end{equation}
where each $\cS_{ij}$ is a $d \times d$ matrix satisfying $\cS_{ji} = \cS_{ij}^T$ for all $i,j$, and $a,b,c$ are $d \times d$ signed diagonal matrices. Explicit calculation yields
\begin{equation} \label{eq:explicit_reverse_induction}
    \cU \cS \cU^{-1} = \frac{1}{2}\mqty(\cS_{11}^{aa}+\cS_{21}^{ba}+\cS_{12}^{ab}+\cS_{22}^{bb} & (-\cS_{11}^{aa}-\cS_{21}^{ba}+\cS_{12}^{ab}+\cS_{22}^{bb})\frT & * \\ -\frT^{-1}(\cS_{11}^{aa}-\cS_{21}^{ba}+\cS_{12}^{ab}-\cS_{22}^{bb}) & -\frT^{-1}(\cS_{11}^{aa}-\cS_{21}^{ba}-\cS_{12}^{ab}+\cS_{22}^{bb})\frT & * \\ * & * & *)
\end{equation}
where $\cS^{ab} = a\cS b^{-1}$ (same for similar notations) and irrelevant elements are denoted by $*$ for simplicity. Comparing eqs.~(\ref{eq:reverse_induction_long}) and eq.~(\ref{eq:explicit_reverse_induction}), we notice that
\begin{equation} \label{eq:reverse_induction_simple}
    \frS = 2\cS_{11}^{aa}, \quad \cS_{11}^{aa} = \cS_{12}^{ab} = \cS_{22}^{bb}.
\end{equation}
Therefore, for given $(\cS, \cT)$, we first find all permutation matrices $P$ satisfying eq.~(\ref{eq:reverse_induction_permutation}), and permute $(\cS, \cT)$ by them. Then, for each permutation, we check if eq.~(\ref{eq:reverse_induction_simple}) is satisfied for some signed diagonal matrices $a,b$. If eq.~(\ref{eq:reverse_induction_simple}) is satisfied, we store $(\frS,\frT^2)$.

\section{Congruence conditions \label{Sec:Cong}}

While the only relations among generators $s, \ t^2$ of  $\Gth$ are $s^4 =1$ and that $s^2$ commutes with $t^2,$ the generators satisfy many more relations in $\Gth/\Gamma(n)$, and any congruence representation of $\Gth$ needs its representation matrices to satisfy these relations. The precise number and content of these relations depend on the level $n.$ Since these relations are extra conditions which a representation $\rho$ needs to satisfy in order to be a congruence representation, we call them \emph{congruence conditions}. In this section, we list congruence conditions of $\Gth$-representations.

Our starting point is Theorem 1 of ref. \cite{Eholzer}, which lists the congruence conditions for representations of $\SL$. The expressions used there to write the conditions involve odd powers of $\rho(t) = \frT$, which are ill-defined for $\Gth$, and are unusable for our purposes. However, the conditions themselves are well-defined for $\Gth$; thus we need only re-write the expressions in terms of elements of $\Gth$. We have re-written expressions so that the conditions are written in terms of $\rho(s) = \frS$ and $\rho(t^2) = \frT^2$ only. Note that the level $n$ is always even for a $\Gth$-representation.

\begin{enumerate}
    \item $\frT^n = \mathds{1}$.

    \item For $a, b \in \mathbb{Z}_n^\times$, let $H(a) = \frS^2\frT^{a^2-a}\frS\frT^{-(\Bar{a}-1)}\frS(\hT^2\frS)^{a-1}$. 
    \begin{align}
        H(-1) &= \frS^2,\\
        H(a)H(b) &= H(ab),\\
        \frS H(a) &= H(\Bar{a})\frS,\\
        H(a) &= 
        \frS^2 \frT^{a^2-a} \frS \frT^{-(\bar{a} - 1)} \frS (\frT^2 \frS)^{a-1}  
    \end{align}
    where $\bar{a}$ is the multiplicative inverse of $a$ modulo $n.$

    \item $(\frS\frT^2)^n = \mathds{1}$.
    
    This condition is not independent of the above conditions,  but provides a simple check in many cases.  We can derive it by a direct calculation:
    \begin{equation}
        (st^2)^n = \mqty(1-n & -n \\ n & 1+n) \equiv \mqty(1 & 0 \\ 0 & 1) \mod{n}.
    \end{equation}

\end{enumerate}

    Projective congruence representations $\tilde{\rho}$ need only satisfy the above conditions up to an overall phase.

\section{Modular data from a direct sum of \texorpdfstring{$1$}{1}-dimensional representations}
\label{Sec:5overlaps}

In section \ref{Sec:resolved}, we have mentioned that a direct sum of five $1$-dimensional representations of $\Gth$ do not give rise to any valid MD. Let us call representations which are a direct sum of $1$-dimensional representations ``$1$d-sum representations,'' and SMCs arising from them as ``$1$d-sum SMCs.'' Here we prove that no $1$d-sum representations of dimension $5$ give rise to valid SMCs. We also prove a more general result on the type and  a $1$d-sum SMC 

Let $\rho = \bigoplus \limits_{i = 1}^{d}  \chi_i$ for some $1$-dimensional representations $\chi_i$. By the $t^2$-spectrum criterion (see section \ref{Sec:resolved}), $\rho(t^2)$ must be proportional to the identity. The resulting MD will be $\hT^2 = \mathds{1}$, and $T$ will consist of $1$s and $-1$s. Thus, the resulting SMC will have $\operatorname{ord} T = 2$. Spherical fusion categories (of which SMCs are a special case) satisfying this condition have been classified in ref. \cite{FSexp2Classification}. In particular, they find that any such spherical fusion category is pointed (abelian). On the other hand, any pointed SMC is split (see Proposition 2.1 of ref. \cite{ClassificationRank8}). Thus, there is no need for extra classification of $1$d-sum SMCs, as all of them come from stacking bosonic theories with $\cF_0$.

For the specific case of dimension $5$, we can simply look at the known bosonic classification in, say, ref. \cite{Reconstruction}, and see that there is no rank $5$ MTC whose $T$-matrix consists exclusively of $1$s and $-1$s. This proves the assertion in section \ref{Sec:resolved} that there are no rank $10$ $1$d-sum SMCs. On the other hand, in the bosonic classification of rank $4$ MTCs there is a well-known MTC whose $T$-matrix consists exclusively of $1$s and $-1$s: the toric code theory. Hence in rank 8 the toric code theory stacked with $\cF_0$ is a $1$d-sum SMC.

\section{Tables of modular data}
\label{Sec:Table}
In this section, we tabulate the  central charge, quantum dimensions, and topological spins of the obtained MD. The tables include data of both unitary and nonunitary MD, while those of nonunitary MD are shaded for legibility. MD with the same fusion rules are shown in a same box. They may be related by Galois conjugation. Table \ref{tab:rank4} contains rank $4$ MD, Table \ref{tab:rank6} rank 6, Tables \ref{tab:rank8a}, \ref{tab:rank8b} rank 8, and Tables \ref{tab:rank10}, \ref{tab:rank10b} rank 10.

For each rank, we enumerate the modular data, and subsequently refer to them in the format ${\rm Rank}^F_{\#}$. For example, the rank $4$ modular data with $D^2 = 13.656,$ which corresponds to the SMC ${\rm PSU}(2)_6$, shall be referred to as $4^F_{\#7}.$ 

In the comments, we specify whether a given MD is primitive. For primitive MD, if there is a known construction we specify it, e.g. ${\rm PSU}(2)_k = \cF_{(A_1, k)}$ (using the notation of ref. \cite{Wen2015Fermion}). If it is not primitive, we give the stacking structure. For bosonic MD referred to in the stacking structure, we use the enumeration in the Supplementary Materials of ref. \cite{Reconstruction} (Sections 5.3, 6.3, 7.3, and 8.3) as a reference system, and denote each modular data by ${\rm Rank}^B_{\#}$.  For example, the modular data of the Fibonacci MTC is referred to as $2^B_{\# 1}.$ Abelian modular data are not listed in ref. \cite{Wen2015Boson}; for these, we use the notation of ref. \cite{Wen2015Boson}, of the form $\rm{Rank}^B_{c \mod 8}$.  $\rm{Rank}^{B, *}_{c \mod 8}$ will refer to their non-unitary counterparts. When we stack two fermionic modular data, the stacking should take into account of the fact that both factors share the same local fermion $f$; this process is denoted $\boxtimes_{\cF_0}$, and we refer to refs. \cite{Wen2015Fermion, Lan_Kong_Wen_2016}
 for  details.

For the new classes of primitive modular data, $10^F_{\#59}$ to $10^F_{\# 70}$,  we specify the maximum fusion coefficient $\hat{N}$ via  $\hat{N} \leq 3$ or $\hat{N} \leq 4$.

\begin{table}[t]
    \centering
    \resizebox{\columnwidth}{!}{
    \begin{tabular}{|c|c|c|c|c|c|}
        \hline
        \# & $c$ & $D^2$ & Quantum dimensions & Topological spins & Comments \\
        \hline
        1 & 0 & 4 & $1,1,1,1$ & $0,\frac{1}{2},\frac{1}{4},-\frac{1}{4}$ & $\cF_0 \boxtimes 2_1^B $\\
        
        \rowcolor[gray]{0.9}
        2 & 0 & 4 & $1,1,-1,-1$ & $0,\frac{1}{2},\frac{1}{4},-\frac{1}{4}$ & $\cF_0 \boxtimes 2_1^{B, *}$  \\
        
        \hline
        3 & $\frac{1}{5}$ & $7.2360$ & $1,1,\zeta^1_3,\zeta^1_3$ & $0,\frac{1}{2},\frac{1}{10},-\frac{2}{5}$ & $\cF_0 \boxtimes 2^B_{\#2}$ \\
        
        4 & $-\frac{1}{5}$ & $7.2360$ & $1,1,\zeta^1_3,\zeta^1_3$ & $0,\frac{1}{2},-\frac{1}{10},\frac{2}{5}$ & $\cF_0 \boxtimes 2^B_{\#1}$ \\ 
        
        \rowcolor[gray]{0.9}5  & $\frac{1}{10}$ & 2.762 & $1,1,-\frac{1}{\zeta^1_3},-\frac{1}{\zeta^1_3}$ & $0,\frac{1}{2},-\frac{1}{5},\frac{3}{10}$ & $\cF_0 \boxtimes 2^B_{\# 4}$ \\
        
        \rowcolor[gray]{0.9} 6 & $-\frac{1}{10}$ &2.762 & $1,1,-\frac{1}{\zeta^1_3},-\frac{1}{\zeta^1_3}$ & $0,\frac{1}{2},\frac{1}{5},-\frac{3}{10}$ &  $\cF_0 \boxtimes 2^B_{\# 3}$\\
        
        \hline
        7 & $\frac{1}{4}$ & $13.656$ & $1,1,\chi^1_2,\chi^1_2$ & $0,\frac{1}{2},\frac{1}{4},-\frac{1}{4}$ & Primitive ($\cF_{(A_1,6)}$) \\
        
        \rowcolor[gray]{0.9} 8 & $\frac{1}{4}$ & 2.343 & $1,1,-\frac{1}{\chi^1_2},-\frac{1}{\chi^1_2}$ & $0,\frac{1}{2},\frac{1}{4},-\frac{1}{4}$ &  Primitive \\
        \hline
    \end{tabular}}
    \caption{List of rank 4 fermionic MD. Shaded data are of nonunitary MD. MD in the same box share the same (fermionic quotient) fusion rule. They may be related by Galois conjugation. For simplicity of notation, we introduce $\zeta^m_n = \frac{\sin[\pi(m+1)/(n+2)]}{\sin[\pi/(n+2)]}$ and $\chi^m_n = m + \sqrt{n}$. Data of bosonic MD are retrieved from ref. \cite{Reconstruction}.}
    \label{tab:rank4}
\end{table}

\begin{table}[t]
    \centering
    \resizebox{\columnwidth}{!}{
    \begin{tabular}{|c|c|c|c|c|c|}
        \hline
        \# & $c$ & $D^2$ & Quantum dimensions & Topological spins & Comments \\
        
        \hline
        1 & 0 & 6 & $1,1,1,1,1,1$ & $0,\frac{1}{2},\frac{1}{6},-\frac{1}{3}, \frac{1}{6}, -\frac{1}{3}$ & $\cF_0 \boxtimes 3^B_{2}$ \\
        
        2& 0 & 6 & $1,1,1,1,1,1$ & $0,\frac{1}{2},-\frac{1}{6},\frac{1}{3},-\frac{1}{6},\frac{1}{3}$ & $\cF_0 \boxtimes 3_{-2}^B $ \\
        \hline
        
        3& 0 & 8 & $1,1,1,1,\sqrt{2},\sqrt{2}$ & $0,\frac{1}{2},0,\frac{1}{2},\frac{1}{16},-\frac{7}{16}$ & $\cF_0 \boxtimes 3^B_{\# 7} = \cF_0 \boxtimes 3^B_{\# 9}$ \\
        
        4& 0 & 8 & $1,1,1,1,\sqrt{2},\sqrt{2}$ & $0,\frac{1}{2},0,\frac{1}{2},-\frac{1}{16},\frac{7}{16}$ & $\cF_0 \boxtimes 3^B_{\# 8} = \cF_0 \boxtimes 3^B_{\# 10}$
        \\
        
        5 & 0 & 8 & $1,1,1,1,\sqrt{2},\sqrt{2}$ & $0,\frac{1}{2},0,\frac{1}{2},\frac{3}{16},-\frac{5}{16}$ & $\cF_0 \boxtimes 3^B_{\# 15} = \cF_0 \boxtimes 3^B_{\# 17}$ \\
        
        6 & 0 & 8 & $1,1,1,1,\sqrt{2},\sqrt{2}$ & $0,\frac{1}{2},0,\frac{1}{2},-\frac{3}{16},\frac{5}{16}$ & $\cF_0 \boxtimes 3^B_{\# 16} = \cF_0 \boxtimes 3^B_{\# 18}$ \\
        
        \rowcolor[gray]{0.9}
        7 & 0 & 8 & $1,1,1,1,-\sqrt{2},-\sqrt{2}$ & $0,\frac{1}{2},0,\frac{1}{2},-\frac{3}{16},\frac{5}{16}$ & $\cF_0 \boxtimes 3^B_{\# 12} = \cF_0 \boxtimes 3^B_{\# 14}$ \\
        
        \rowcolor[gray]{0.9}
        8 & 0 & 8 & $1,1,1,1,-\sqrt{2},-\sqrt{2}$ & $0,\frac{1}{2},0,\frac{1}{2},-\frac{1}{16},\frac{7}{16}$ &  $\cF_0 \boxtimes 3^B_{\# 20} =\cF_0 \boxtimes 3^B_{\# 22}$\\
        
        \rowcolor[gray]{0.9}
        9 & 0 & 8 & $1,1,1,1,-\sqrt{2},-\sqrt{2}$ & $0,\frac{1}{2},0,\frac{1}{2},\frac{1}{16},-\frac{7}{16}$ &  $\cF_0 \boxtimes 3^B_{\# 19} =\cF_0 \boxtimes 3^B_{\# 21}$\\
        
        \rowcolor[gray]{0.9}
        10 & 0 & 8 & $1,1,1,1,-\sqrt{2},-\sqrt{2}$ & $0,\frac{1}{2},0,\frac{1}{2},\frac{3}{16},-\frac{5}{16}$ &  $\cF_0 \boxtimes 3^B_{\# 11} =\cF_0 \boxtimes 3^B_{\# 13}$\\
        
        \hline
        11 & $\frac{1}{7}$ & 18.591 & $1,1,\zeta^1_5,\zeta^1_5,\zeta^2_5,\zeta^2_5$ & $0,\frac{1}{2},-\frac{1}{7},\frac{5}{14},-\frac{3}{14},\frac{2}{7}$ & $\cF_0 \boxtimes 3^B_{\# 2}$ \\
        
        12 & $-\frac{1}{7}$ & 18.591 & $1,1,\zeta^1_5,\zeta^1_5,\zeta^2_5,\zeta^2_5$ & $0,\frac{1}{2},\frac{1}{7},-\frac{5}{14},\frac{3}{14},-\frac{2}{7}$ & $\cF_0 \boxtimes 3^B_{\# 1}$ \\
        
        \rowcolor[gray]{0.9}
        13 & $-\frac{3}{14}$ & 5.724& $1,1,-\frac{\zeta^2_5}{\zeta^1_5},-\frac{\zeta^2_5}{\zeta^1_5},\frac{1}{\zeta^1_5},\frac{1}{\zeta^1_5}$ & $0,\frac{1}{2},\frac{3}{14},-\frac{2}{7},\frac{1}{14},-\frac{3}{7}$ & $\cF_0 \boxtimes 3^B_{\# 5}$ \\
        
        \rowcolor[gray]{0.9}
        14 & $\frac{1}{14}$ & 3.682 & $1,1,-\frac{\zeta^1_5}{\zeta^2_5},-\frac{\zeta^1_5}{\zeta^2_5},\frac{1}{\zeta^2_5},\frac{1}{\zeta^2_5}$ & $0,\frac{1}{2},\frac{1}{7},-\frac{5}{14},-\frac{1}{14},\frac{3}{7}$ & $\cF_0 \boxtimes 3^B_{\# 3}$ \\
        
        \rowcolor[gray]{0.9}
        15 & $-\frac{1}{14}$ & 3.682 & $1,1,-\frac{\zeta^1_5}{\zeta^2_5},-\frac{\zeta^1_5}{\zeta^2_5},\frac{1}{\zeta^2_5},\frac{1}{\zeta^2_5}$ & $0,\frac{1}{2},-\frac{1}{7},\frac{5}{14},\frac{1}{14},-\frac{3}{7}$ &$\cF_0 \boxtimes 3^B_{\# 6}$  \\
        
        \rowcolor[gray]{0.9}
        16 & $\frac{3}{14}$ & 5.724& $1,1,-\frac{\zeta^2_5}{\zeta^1_5},-\frac{\zeta^2_5}{\zeta^1_5},\frac{1}{\zeta^1_5},\frac{1}{\zeta^1_5}$ & $0,\frac{1}{2},-\frac{3}{14},\frac{2}{7},-\frac{1}{14},\frac{3}{7}$ &$\cF_0 \boxtimes 3^B_{\# 4}$  \\
        
        \hline
        17 & 0 & 44.784 & $1,1,\chi^1_3,\chi^1_3,\chi^2_3,\chi^2_3$ & $0,\frac{1}{2},-\frac{1}{6},\frac{1}{3},0,\frac{1}{2}$ & Primitive ($\cF_{(A_1,-10)}$ )\\
        
        18 & 0 & 44.784 & $1,1,\chi^1_3,\chi^1_3,\chi^2_3,\chi^2_3$ & $0,\frac{1}{2},\frac{1}{6},-\frac{1}{3},0,\frac{1}{2}$ &  Primitive ($\cF_{(A_1,10)}$ ) \\
        
        \rowcolor[gray]{0.9}
        19 & 0 & 3.2154 & $1,1,-\frac{\chi^1_3}{\chi^2_3},-\frac{\chi^1_3}{\chi^2_3},\frac{1}{\chi^2_3},\frac{1}{\chi^2_3}$ & $0,\frac{1}{2},-\frac{1}{6},\frac{1}{3},0,\frac{1}{2}$ &  Primitive \\
        
        \rowcolor[gray]{0.9}
        20 & 0 & 3.2154 & $1,1,-\frac{\chi^1_3}{\chi^2_3},-\frac{\chi^1_3}{\chi^2_3},\frac{1}{\chi^2_3},\frac{1}{\chi^2_3}$ & $0,\frac{1}{2},\frac{1}{6},-\frac{1}{3},0,\frac{1}{2}$ &  Primitive \\
        \hline
    \end{tabular}}
    \caption{List of rank 6 fermionic MD.}
    \label{tab:rank6}
\end{table}

\begin{table}[t]
    \centering
    \resizebox{\columnwidth}{!}{
    \begin{tabular}{|c|c|c|c|c|c|}
        \hline
        \# & $c$ & $D^2$ & Quantum dimensions & Topological spins & Comments \\
        \hline
        1 & 0 & 8 & $1,1,1,1,1,1,1,1$ & $0,\frac{1}{2},0,\frac{1}{2},0,\frac{1}{2},0,\frac{1}{2}$ & $\cF_0 \boxtimes 4_{0}^{B,a} $ \\
        
        \rowcolor[gray]{0.9}
        2 & 0 & 8 & $1,1,1,1,-1,-1,-1,-1$ & $0,\frac{1}{2},0,\frac{1}{2},0,\frac{1}{2},0,\frac{1}{2}$ & $\cF_0 \boxtimes 4_{0}^{B,a,*} $\\
        
        \hline
        3& 0 & 8 & $1,1,1,1,1,1,1,1$ & $0,\frac{1}{2},0,\frac{1}{2},\frac{1}{8},-\frac{3}{8},\frac{1}{8},-\frac{3}{8}$ & $\cF_0 \boxtimes 4_{1}^B$ \\
        
        4& 0 & 8 & $1,1,1,1,1,1,1,1$ & $0,\frac{1}{2},0,\frac{1}{2},-\frac{1}{8},\frac{3}{8},-\frac{1}{8},\frac{3}{8}$ & $\cF_0 \boxtimes 4_{3}^B$ \\
        
        \rowcolor[gray]{0.9}
        5 & 0 & 8 & $1,1,-1,-1,-1,-1,1,1$ & $0,\frac{1}{2},0,\frac{1}{2},-\frac{1}{8},\frac{3}{8},-\frac{1}{8},\frac{3}{8}$ & $\cF_0 \boxtimes 4_{3}^{B,*}$\\
        
        \rowcolor[gray]{0.9}
        6 & 0 & 8 & $1,1,-1,-1,-1,-1,1,1$ & $0,\frac{1}{2},0,\frac{1}{2},\frac{1}{8},-\frac{3}{8},\frac{1}{8},-\frac{3}{8}$ & $\cF_0 \boxtimes 4_{1}^{B,*}$\\
        
        \hline
        7 & 0 & 8 & $1,1,1,1,1,1,1,1$ & $0,\frac{1}{2},0,\frac{1}{2},\frac{1}{4},-\frac{1}{4},\frac{1}{4},-\frac{1}{4}$ & $\cF_0 \boxtimes 4_{0}^{B,b}$ \\
        
        \rowcolor[gray]{0.9}
        8 & 0 & 8 & $1,1,-1,-1,-1,-1,1,1$ & $0,\frac{1}{2},0,\frac{1}{2},\frac{1}{4},-\frac{1}{4},\frac{1}{4},-\frac{1}{4}$ & $\cF_0 \boxtimes 4_{0}^{B,b, *1}$\\
        
        \rowcolor[gray]{0.9}
        9 & 0 & 8 & $1,1,1,1,-1,-1,-1,-1$ & $0,\frac{1}{2},0,\frac{1}{2},\frac{1}{4},-\frac{1}{4},\frac{1}{4},-\frac{1}{4}$ &  $\cF_0 \boxtimes 4_{0}^{B,b, *2}$\\
        
        \hline
        10 & $\frac{1}{5}$ & 14.472 & $1,1,1,1,\zeta^1_3,\zeta^1_3,\zeta^1_3,\zeta^1_3$ & $0,\frac{1}{2},\frac{1}{4},-\frac{1}{4},\frac{1}{10},-\frac{2}{5},-\frac{3}{20},\frac{7}{20}$ & $\cF_0 \boxtimes 4^B_{\# 18} = \cF_0\boxtimes 4^B_{\# 20}$ \\
        
        11 & $-\frac{1}{5}$ & 14.472 & $1,1,1,1,\zeta^1_3,\zeta^1_3,\zeta^1_3,\zeta^1_3$ & $0,\frac{1}{2},\frac{1}{4},-\frac{1}{4},\frac{3}{20},-\frac{7}{20},-\frac{1}{10},\frac{2}{5}$ & $\cF_0 \boxtimes 4^B_{\# 17} = \cF_0\boxtimes 4^B_{\# 19}$ \\
        
        \rowcolor[gray]{0.9}
        12 & $-\frac{1}{5}$ & 14.472 & $1,1,-1,-1,-\zeta^1_3,-\zeta^1_3,\zeta^1_3,\zeta^1_3$ & $0,\frac{1}{2},\frac{1}{4},-\frac{1}{4},\frac{3}{20},-\frac{7}{20},-\frac{1}{10},\frac{2}{5}$ & $\cF_0 \boxtimes 4^B_{\# 26} =\cF_0 \boxtimes 4^B_{\# 27}$ \\
        
        \rowcolor[gray]{0.9}
        13 & $\frac{1}{10}$ & 5.528 & $1,1,-1,-1,-\frac{1}{\zeta^1_3},-\frac{1}{\zeta^1_3},\frac{1}{\zeta^1_3},\frac{1}{\zeta^1_3}$ & $0,\frac{1}{2},\frac{1}{4},-\frac{1}{4},-\frac{1}{5},\frac{3}{10},\frac{1}{20},-\frac{9}{20}$ & $\cF_0 \boxtimes 4^B_{\# 25} = \cF_0 \boxtimes 4^B_{\# 29}$\\
        
        \rowcolor[gray]{0.9}
        14 & $\frac{1}{10}$ & 5.528 & $1,1,1,1,-\frac{1}{\zeta^1_3},-\frac{1}{\zeta^1_3},-\frac{1}{\zeta^1_3},-\frac{1}{\zeta^1_3}$ & $0,\frac{1}{2},\frac{1}{4},-\frac{1}{4},\frac{1}{20},-\frac{9}{20},-\frac{1}{5},\frac{3}{10}$ & $\cF_0 \boxtimes 4^B_{\# 22} = \cF_0 \boxtimes 4^B_{\# 24}$\\
        
        \rowcolor[gray]{0.9}
        15 & $-\frac{1}{10}$ & 5.528 & $1,1,-1,-1,-\frac{1}{\zeta^1_3},-\frac{1}{\zeta^1_3},\frac{1}{\zeta^1_3},\frac{1}{\zeta^1_3}$ & $0,\frac{1}{2},\frac{1}{4},-\frac{1}{4},\frac{1}{5},-\frac{3}{10},-\frac{1}{20},\frac{9}{20}$ & $\cF_0 \boxtimes 4^B_{\# 28} = \cF_0 \boxtimes 4^B_{\# 32}$\\
        
        \rowcolor[gray]{0.9} 
        16 & $-\frac{1}{10}$ & 5.528 & $1,1,1,1,-\frac{1}{\zeta^1_3},-\frac{1}{\zeta^1_3},-\frac{1}{\zeta^1_3},-\frac{1}{\zeta^1_3}$ & $0,\frac{1}{2},\frac{1}{4},-\frac{1}{4},\frac{1}{5},-\frac{3}{10},-\frac{1}{20},\frac{9}{20}$ &$\cF_0 \boxtimes 4^B_{\# 21} = \cF_0 \boxtimes 4^B_{\# 23}$ \\
        
        \rowcolor[gray]{0.9}
        17 & $\frac{1}{5}$ & 14.472& $1,1,-1,-1,-\zeta^1_3,-\zeta^1_3,\zeta^1_3,\zeta^1_3$ & $0,\frac{1}{2},\frac{1}{4},-\frac{1}{4},-\frac{3}{20},\frac{7}{20},\frac{1}{10},-\frac{2}{5}$ &$\cF_0 \boxtimes 4^B_{\# 30} = \cF_0 \boxtimes 4^B_{\# 31}$ \\
        \hline
        
        18 & 0 & 24 & $1,1,1,1,2,2,\sqrt{6},\sqrt{6}$ & $0,\frac{1}{2},0,\frac{1}{2},\frac{1}{6},-\frac{1}{3},\frac{1}{16},-\frac{7}{16}$ & Primitive ($\cF_{U(1)_6/\mathbb{Z}_2}$) \\
        
        19 & 0 & 24 & $1,1,1,1,2,2,\sqrt{6},\sqrt{6}$ & $0,\frac{1}{2},0,\frac{1}{2},\frac{1}{6},-\frac{1}{3},-\frac{1}{16},\frac{7}{16}$ & Primitive \\
        
        20 & 0 & 24 & $1,1,1,1,2,2,\sqrt{6},\sqrt{6}$ & $0,\frac{1}{2},0,\frac{1}{2},\frac{1}{6},-\frac{1}{3},\frac{3}{16},-\frac{5}{16}$ & Primitive \\
        
        21 & 0 & 24 & $1,1,1,1,2,2,\sqrt{6},\sqrt{6}$ & $0,\frac{1}{2},0,\frac{1}{2},\frac{1}{6},-\frac{1}{3},-\frac{3}{16},\frac{5}{16}$ & Primitive \\
        
        22 & 0 & 24 & $1,1,1,1,2,2,\sqrt{6},\sqrt{6}$ & $0,\frac{1}{2},0,\frac{1}{2},-\frac{1}{6},\frac{1}{3},\frac{1}{16},-\frac{7}{16}$ & Primitive \\
        
        23 & 0 & 24 & $1,1,1,1,2,2,\sqrt{6},\sqrt{6}$ & $0,\frac{1}{2},0,\frac{1}{2},-\frac{1}{6},\frac{1}{3},-\frac{1}{16},\frac{7}{16}$ & Primitive \\
        
        24 & 0 & 24 & $1,1,1,1,2,2,\sqrt{6},\sqrt{6}$ & $0,\frac{1}{2},0,\frac{1}{2},-\frac{1}{6},\frac{1}{3},\frac{3}{16},-\frac{5}{16}$ & Primitive \\
        
        25 & 0 & 24 & $1,1,1,1,2,2,\sqrt{6},\sqrt{6}$ & $0,\frac{1}{2},0,\frac{1}{2},-\frac{1}{6},\frac{1}{3},-\frac{3}{16},\frac{5}{16}$ & Primitive \\
        
        \rowcolor[gray]{0.9}
        26 & 0 & 24 & $1,1,1,1,2,2,-\sqrt{6},-\sqrt{6}$ & $0,\frac{1}{2},0,\frac{1}{2},-\frac{1}{6},\frac{1}{3},-\frac{1}{16},\frac{7}{16}$ & Primitive\\
        
        \rowcolor[gray]{0.9}
        27 & 0 & 24 & $1,1,1,1,2,2,-\sqrt{6},-\sqrt{6}$ & $0,\frac{1}{2},0,\frac{1}{2},\frac{1}{6},-\frac{1}{3},-\frac{1}{16},\frac{7}{16}$ &Primitive \\
        
        \rowcolor[gray]{0.9}
        28 & 0 & 24 & $1,1,1,1,2,2,-\sqrt{6},-\sqrt{6}$ & $0,\frac{1}{2},0,\frac{1}{2},\frac{1}{6},-\frac{1}{3},-\frac{3}{16},\frac{5}{16}$ & Primitive\\
        
        \rowcolor[gray]{0.9}
        29 & 0 & 24 & $1,1,1,1,2,2,-\sqrt{6},-\sqrt{6}$ & $0,\frac{1}{2},0,\frac{1}{2},\frac{1}{6},-\frac{1}{3},\frac{3}{16},-\frac{5}{16}$ &Primitive \\
        
        \rowcolor[gray]{0.9}
        30 & 0 & 24 & $1,1,1,1,2,2,-\sqrt{6},-\sqrt{6}$ & $0,\frac{1}{2},0,\frac{1}{2},-\frac{1}{6},\frac{1}{3},\frac{1}{16},-\frac{7}{16}$ & Primitive\\
        
        \rowcolor[gray]{0.9}
        31 & 0 & 24 & $1,1,1,1,2,2,-\sqrt{6},-\sqrt{6}$ & $0,\frac{1}{2},0,\frac{1}{2},\frac{1}{6},-\frac{1}{3},\frac{1}{16},-\frac{7}{16}$ &Primitive \\
        
        \rowcolor[gray]{0.9}
        32 & 0 & 24 & $1,1,1,1,2,2,-\sqrt{6},-\sqrt{6}$ & $0,\frac{1}{2},0,\frac{1}{2},-\frac{1}{6},\frac{1}{3},-\frac{3}{16},\frac{5}{16}$ & Primitive\\
        
        \rowcolor[gray]{0.9}
        33 & 0 & 24 & $1,1,1,1,2,2,-\sqrt{6},-\sqrt{6}$ & $0,\frac{1}{2},0,\frac{1}{2},-\frac{1}{6},\frac{1}{3},\frac{3}{16},-\frac{5}{16}$ & Primitive\\
        \hline
    \end{tabular}}
    \caption{List of rank 8 fermionic MD.}
    \label{tab:rank8a}
\end{table}

\begin{table}[t]
    \centering
    \resizebox{\columnwidth}{!}{
    \begin{tabular}{|c|c|c|c|c|c|}
        \hline
        \# & $c$ & $D^2$ & Quantum dimensions & Topological spins & Comments \\
        \hline
        
        34 & $-\frac{1}{10}$ & 26.180 & $1,1,\zeta^1_3,\zeta^1_3,\zeta^1_3,\zeta^1_3,\zeta^2_8,\zeta^2_8$ & $0,\frac{1}{2},\frac{1}{10},-\frac{2}{5},\frac{1}{10},-\frac{2}{5},\frac{1}{5},-\frac{3}{10}$ & $\cF_0 \boxtimes 4^B_{\# 2}$ \\
        
        35 & 0 & 26.180 & $1,1,\zeta^1_3,\zeta^1_3,\zeta^1_3,\zeta^1_3,\zeta^2_8,\zeta^2_8$ & $0,\frac{1}{2},\frac{1}{10},-\frac{2}{5},-\frac{1}{10},\frac{2}{5},0,\frac{1}{2}$ & $\cF_0 \boxtimes 4_{\# 5}^B$ \\
        
        36 & $\frac{1}{10}$ & 26.180 & $1,1,\zeta^1_3,\zeta^1_3,\zeta^1_3,\zeta^1_3,\zeta^2_8,\zeta^2_8$ & $0,\frac{1}{2},-\frac{1}{10},\frac{2}{5},-\frac{1}{10},\frac{2}{5},-\frac{1}{5},\frac{3}{10}$ & $\cF_0 \boxtimes 4_{\# 1}^B$ 
        \\
        
        \rowcolor[gray]{0.9}
        37 & 0 & 3.820 & $1,1,-\frac{1}{\zeta^1_3},-\frac{1}{\zeta^1_3},-\frac{1}{\zeta^1_3},-\frac{1}{\zeta^1_3},\frac{1}{\zeta^2_8},\frac{1}{\zeta^2_8}$ & $0,\frac{1}{2},\frac{1}{5},-\frac{3}{10},-\frac{1}{5},\frac{3}{10},0,\frac{1}{2}$ & $\cF_0 \boxtimes 4^B_{\# 6}$
        \\
        
        \rowcolor[gray]{0.9}
        38 & $-\frac{1}{5}$ & 3.820 & $1,1,-\frac{1}{\zeta^1_3},-\frac{1}{\zeta^1_3},-\frac{1}{\zeta^1_3},-\frac{1}{\zeta^1_3},\frac{1}{\zeta^2_8},\frac{1}{\zeta^2_8}$ & $0,\frac{1}{2},\frac{1}{5},-\frac{3}{10},\frac{1}{5},-\frac{3}{10},-\frac{1}{10},\frac{2}{5}$ & $\cF_0 \boxtimes 4^B_{\# 3}$
        \\
        
        \rowcolor[gray]{0.9}
        39 & $\frac{1}{5}$ & 3.820 & $1,1,-\frac{1}{\zeta^1_3},-\frac{1}{\zeta^1_3},-\frac{1}{\zeta^1_3},-\frac{1}{\zeta^1_3},\frac{1}{\zeta^2_8},\frac{1}{\zeta^2_8}$ & $0,\frac{1}{2},-\frac{1}{5},\frac{3}{10},-\frac{1}{5},\frac{3}{10},\frac{1}{10},-\frac{2}{5}$ & $\cF_0 \boxtimes 4^B_{\# 4}$
        \\
        
        \rowcolor[gray]{0.9}
        40 & $-\frac{1}{5}$ & 10 & $1,1,-1,-1,-\frac{1}{\zeta^1_3},-\frac{1}{\zeta^1_3},\zeta^1_3,\zeta^1_3$ & $0,\frac{1}{2},-\frac{1}{10},\frac{2}{5},-\frac{1}{5},\frac{3}{10},\frac{1}{10},-\frac{2}{5}$ & $\cF_0 \boxtimes 4^B_{\# 10}$
        \\
        
        \rowcolor[gray]{0.9}
        41 & $-\frac{1}{10}$ & 10 & $1,1,-1,-1,-\frac{1}{\zeta^1_3},-\frac{1}{\zeta^1_3},\zeta^1_3,\zeta^1_3$ & $0,\frac{1}{2},-\frac{1}{5},\frac{3}{10},\frac{1}{5},-\frac{3}{10},\frac{1}{10},-\frac{2}{5}$ & $\cF_0 \boxtimes 4^B_{\# 9}$
        \\
        
        \rowcolor[gray]{0.9}
        42 & $\frac{1}{10}$ & 10& $1,1,-1,-1,-\frac{1}{\zeta^1_3},-\frac{1}{\zeta^1_3},\zeta^1_3,\zeta^1_3$ & $0,\frac{1}{2},\frac{1}{5},-\frac{3}{10},-\frac{1}{5},\frac{3}{10},-\frac{1}{10},\frac{2}{5}$ & $\cF_0 \boxtimes 4^B_{\# 8}$
        \\
        
        \rowcolor[gray]{0.9}
        43 & $\frac{1}{5}$ & 10& $1,1,-1,-1,-\frac{1}{\zeta^1_3},-\frac{1}{\zeta^1_3},\zeta^1_3,\zeta^1_3$ & $0,\frac{1}{2},\frac{1}{10},-\frac{2}{5},\frac{1}{5},-\frac{3}{10},-\frac{1}{10},\frac{2}{5}$ & $\cF_0 \boxtimes 4^B_{\# 7}$
        \\
        \hline
        
        44 & $\frac{1}{4}$ & 27.313 & $1,1,1,1,\chi^1_2,\chi^1_2,\chi^1_2,\chi^1_2$ & $0,\frac{1}{2},\frac{1}{4},-\frac{1}{4},0,\frac{1}{2},\frac{1}{4},-\frac{1}{4}$ & $4^F_{\# 7} \boxtimes 2_{1}^B$ 
        \\
        
        \rowcolor[gray]{0.9}
        45 & $\frac{1}{4}$ & 27.313 & $1,1,-1,-1,-\chi^1_2,-\chi^1_2,\chi^1_2,\chi^1_2$ & $0,\frac{1}{2},\frac{1}{4},-\frac{1}{4},0,\frac{1}{2},\frac{1}{4},-\frac{1}{4}$ & $4^F_{\# 7} \boxtimes 2_{1}^{B,*}$  
        \\
        
        \rowcolor[gray]{0.9}
        46 & $\frac{1}{4}$ & 4.6863 & $1,1,-1,-1,-\frac{1}{\chi^1_2},-\frac{1}{\chi^1_2},\frac{1}{\chi^1_2},\frac{1}{\chi^1_2}$ & $0,\frac{1}{2},\frac{1}{4},-\frac{1}{4},\frac{1}{4},-\frac{1}{4},0,\frac{1}{2}$ &  $4^F_{\# 8} \boxtimes 2_{1}^B$ 
        \\
        
        \rowcolor[gray]{0.9}
        47 & $\frac{1}{4}$ & 4.6863 & $1,1,1,1,-\frac{1}{\chi^1_2},-\frac{1}{\chi^1_2},-\frac{1}{\chi^1_2},-\frac{1}{\chi^1_2}$ & $0,\frac{1}{2},\frac{1}{4},-\frac{1}{4},0,\frac{1}{2},\frac{1}{4},-\frac{1}{4}$ &  $4^F_{\# 8} \boxtimes 2_{1}^{B,*}$
        \\
        
        \hline
        
        48 & $\frac{1}{6}$ & 38.468 & $1,1,\zeta^1_7,\zeta^1_7,\zeta^2_7,\zeta^2_7,\zeta^3_7,\zeta^3_7$ & $0,\frac{1}{2},\frac{1}{6},-\frac{1}{3},-\frac{2}{9},\frac{5}{18},-\frac{1}{6},\frac{1}{3}$ & $\cF_0 \boxtimes 4_{\# 12}^B$ 
        \\
        
        49 & $-\frac{1}{6}$ & 38.468 & $1,1,\zeta^1_7,\zeta^1_7,\zeta^2_7,\zeta^2_7,\zeta^3_7,\zeta^3_7$ & $0,\frac{1}{2},-\frac{1}{6},\frac{1}{3},\frac{2}{9},-\frac{5}{18},\frac{1}{6},-\frac{1}{3}$ & $\cF_0 \boxtimes 4_{\# 11}^B$ 
        \\
        
        \rowcolor[gray]{0.9}
        50 & $-\frac{1}{6}$ & 10.890 & $1,1,-\frac{\zeta^3_7}{\zeta^1_7},-\frac{\zeta^3_7}{\zeta^1_7},-\frac{\zeta^3_7}{\zeta^1_7},-\frac{\zeta^3_7}{\zeta^1_7},\frac{\zeta^2_7}{\zeta^1_7},\frac{\zeta^2_7}{\zeta^1_7}$ & $0,\frac{1}{2},-\frac{1}{6},\frac{1}{3},\frac{1}{6},-\frac{1}{3},\frac{1}{18},-\frac{4}{9}$ & $\cF_0 \boxtimes 4^B_{\# 15}$ 
        \\
        
        \rowcolor[gray]{0.9}
        51 & $-\frac{1}{6}$ & 4.640 & $1,1,-\frac{\zeta^2_7}{\zeta^3_7},-\frac{\zeta^2_7}{\zeta^3_7},-\frac{1}{\zeta^3_7},-\frac{1}{\zeta^3_7},\frac{\zeta^1_7}{\zeta^3_7},\frac{\zeta^1_7}{\zeta^3_7}$ & $0,\frac{1}{2},-\frac{1}{9},\frac{7}{18},-\frac{1}{6},\frac{1}{3},\frac{1}{6},-\frac{1}{3}$ &  $\cF_0 \boxtimes 4^B_{\# 16}$
        \\
        
        \rowcolor[gray]{0.9}
        52 & $\frac{1}{6}$ & 10.890 & $1,1,-\frac{\zeta^3_7}{\zeta^1_7},-\frac{\zeta^3_7}{\zeta^1_7},-\frac{\zeta^3_7}{\zeta^1_7},-\frac{\zeta^3_7}{\zeta^1_7},\frac{\zeta^2_7}{\zeta^1_7},\frac{\zeta^2_7}{\zeta^1_7}$ & $0,\frac{1}{2},\frac{1}{6},-\frac{1}{3},-\frac{1}{6},\frac{1}{3},-\frac{1}{18},\frac{4}{9}$ &  $\cF_0 \boxtimes 4^B_{\# 14}$
        \\
        
        \rowcolor[gray]{0.9}
        53 & $\frac{1}{6}$ & 4.640 & $1,1,-\frac{\zeta^2_7}{\zeta^3_7},-\frac{\zeta^2_7}{\zeta^3_7},-\frac{1}{\zeta^3_7},-\frac{1}{\zeta^3_7},\frac{\zeta^1_7}{\zeta^3_7},\frac{\zeta^1_7}{\zeta^3_7}$ & $0,\frac{1}{2},\frac{1}{9},-\frac{7}{18},\frac{1}{6},-\frac{1}{3},-\frac{1}{6},\frac{1}{3}$ &  $\cF_0 \boxtimes 4^B_{\# 13}$
        \\
        \hline
        
        54 & $-\frac{1}{20}$ & 49.410 & $1,1,\zeta^1_3\zeta^2_6,\zeta^1_3\zeta^2_6,\zeta^2_6,\zeta^2_6,\zeta^1_3,\zeta^1_3$ & $0,\frac{1}{2},-\frac{3}{20},\frac{7}{20},\frac{1}{4},-\frac{1}{4},\frac{1}{10},-\frac{2}{5}$ &  $4_{\# 7 }^F \boxtimes 2_{\# 2}^B$ \\
        
        55 & $\frac{1}{20}$ & 49.410 & $1,1,\zeta^1_3\zeta^2_6,\zeta^1_3\zeta^2_6,\zeta^2_6,\zeta^2_6,\zeta^1_3,\zeta^1_3$ & $0,\frac{1}{2},\frac{3}{20},-\frac{7}{20},\frac{1}{4},-\frac{1}{4},-\frac{1}{10},\frac{2}{5}$ & $4_{\# 7 }^F  \boxtimes 2_{\#1}$ \\
        
        \rowcolor[gray]{0.9}
        56 & $-\frac{1}{20}$ & 8.478 & $1,1,-\frac{\zeta^1_3}{\zeta^2_6},-\frac{\zeta^1_3}{\zeta^2_6},-\frac{1}{\zeta^2_6},-\frac{1}{\zeta^2_6},\zeta^1_3,\zeta^1_3$ & $0,\frac{1}{2},\frac{3}{20},-\frac{7}{20},\frac{1}{4},-\frac{1}{4},-\frac{1}{10},\frac{2}{5}$ &  $4^F_{\#8} \boxtimes 2^B_{\# 1}$
        \\
        
        \rowcolor[gray]{0.9}
        57 & $-\frac{3}{20}$ & 18.873 & $1,1,-\frac{\zeta^2_6}{\zeta^1_3},-\frac{\zeta^2_6}{\zeta^1_3},\zeta^2_6,\zeta^2_6,-\frac{1}{\zeta^1_3},-\frac{1}{\zeta^1_3}$ & $0,\frac{1}{2},\frac{1}{20},-\frac{9}{20},\frac{1}{4},-\frac{1}{4},-\frac{1}{5},\frac{3}{10}$ & $4^F_{\#7} \boxtimes 2^{B}_{\# 4}$
        \\
        
        \rowcolor[gray]{0.9}
        58 & $-\frac{3}{20}$ & 3.2381 & $1,1,\frac{1}{\zeta^1_3\zeta^2_6},\frac{1}{\zeta^1_3\zeta^2_6},-\frac{1}{\zeta^2_6},-\frac{1}{\zeta^2_6},-\frac{1}{\zeta^1_3},-\frac{1}{\zeta^1_3}$ & $0,\frac{1}{2},\frac{1}{20},-\frac{9}{20},\frac{1}{4},-\frac{1}{4},-\frac{1}{5},\frac{3}{10}$ & $4^F_{\# 8} \boxtimes 2^{B}_{\# 4}$\\
        
        \rowcolor[gray]{0.9}
        59 & $\frac{3}{20}$ & 18.873 & $1,1,-\frac{\zeta^2_6}{\zeta^1_3},-\frac{\zeta^2_6}{\zeta^1_3},\zeta^2_6,\zeta^2_6,-\frac{1}{\zeta^1_3},-\frac{1}{\zeta^1_3}$ & $0,\frac{1}{2},-\frac{1}{20},\frac{9}{20},\frac{1}{4},-\frac{1}{4},\frac{1}{5},-\frac{3}{10}$ &  $4^F_{\# 7} \boxtimes 2^{B}_{\# 3}$\\
        
        \rowcolor[gray]{0.9}
        60 & $\frac{3}{20}$ & 3.2381 & $1,1,\frac{1}{\zeta^1_3\zeta^2_6},\frac{1}{\zeta^1_3\zeta^2_6},-\frac{1}{\zeta^2_6},-\frac{1}{\zeta^2_6},-\frac{1}{\zeta^1_3},-\frac{1}{\zeta^1_3}$ & $0,\frac{1}{2},-\frac{1}{20},\frac{9}{20},\frac{1}{4},-\frac{1}{4},\frac{1}{5},-\frac{3}{10}$ &  $4^F_{\# 8} \boxtimes 2^{B}_{\# 3}$\\
        
        \rowcolor[gray]{0.9}
        61 & $-\frac{1}{20}$ & 8.478 & $1,1,-\frac{\zeta^1_3}{\zeta^2_6},-\frac{\zeta^1_3}{\zeta^2_6},-\frac{1}{\zeta^2_6},-\frac{1}{\zeta^2_6},\zeta^1_3,\zeta^1_3$ & $0,\frac{1}{2},-\frac{3}{20},\frac{7}{20},\frac{1}{4},-\frac{1}{4},\frac{1}{10},-\frac{2}{5}$ &  $4^F_{\# 8} \boxtimes 2^{B}_{\# 2}$\\
        \hline
        
        62 & 0 & 93.254 & $1,1,\chi^1_2,\chi^1_2,\chi^1_2,\chi^1_2,\chi^3_8,\chi^3_8$ & $0,\frac{1}{2},\frac{1}{4},-\frac{1}{4},\frac{1}{4},-\frac{1}{4},0,\frac{1}{2}$ & $4_{\# 7}^F \boxtimes_{\cF_0} 4_{\# 7}^F $ \\
        
        \rowcolor[gray]{0.9}
        63 & 0 & 16 & $1,1,-1,-1,-\frac{1}{\chi^1_2},-\frac{1}{\chi^1_2},\chi^1_2,\chi^1_2$ & $0,\frac{1}{2},0,\frac{1}{2},\frac{1}{4},-\frac{1}{4},\frac{1}{4},-\frac{1}{4}$ & $4_{\# 7}^F \boxtimes_{\cF_0} 4_{\# 8}^F $\\
        
        \rowcolor[gray]{0.9}
        64 & 0 & 2.7452 & $1,1,\frac{1}{\qty(\chi^1_2)^2},\frac{1}{\qty(\chi^1_2)^2},-\frac{1}{\chi^1_2},-\frac{1}{\chi^1_2},-\frac{1}{\chi^1_2},-\frac{1}{\chi^1_2}$ & $0,\frac{1}{2},0,\frac{1}{2},\frac{1}{4},-\frac{1}{4},\frac{1}{4},-\frac{1}{4}$ & $4_{\# 8}^F \boxtimes_{\cF_0} 4_{\# 8}^F $\\
        \hline
        
        65 & $-\frac{1}{8}$ & 105.09 & $1,1,\zeta^2_{14},\zeta^2_{14},\zeta^4_{14},\zeta^4_{14},\zeta^6_{14},\zeta^6_{14}$ & $0,\frac{1}{2},-\frac{1}{8},\frac{3}{8},\frac{1}{8},-\frac{3}{8},\frac{1}{4},-\frac{1}{4}$ & Primitive  ($\cF_{(A_1,-14)}$) \\
        
        66 & $\frac{1}{8}$ & 105.09 & $1,1,\zeta^2_{14},\zeta^2_{14},\zeta^4_{14},\zeta^4_{14},\zeta^6_{14},\zeta^6_{14}$ & $0,\frac{1}{2},\frac{1}{8},-\frac{3}{8},-\frac{1}{8},\frac{3}{8},\frac{1}{4},-\frac{1}{4}$ & Primitive  ($\cF_{(A_1,14)}$) \\
        
        \rowcolor[gray]{0.9}
        67 & $\frac{1}{8}$ & 12.959 & $1,1,-\frac{\zeta^4_{14}}{\zeta^2_{14}},-\frac{\zeta^4_{14}}{\zeta^2_{14}},\frac{1}{\zeta^2_{14}},\frac{1}{\zeta^2_{14}},\frac{\zeta^6_{14}}{\zeta^2_{14}},\frac{\zeta^6_{14}}{\zeta^2_{14}}$ & $0,\frac{1}{2},\frac{1}{4},-\frac{1}{4},-\frac{1}{8},\frac{3}{8},\frac{1}{8},-\frac{3}{8}$ & Primitive \\
        
        \rowcolor[gray]{0.9}
        68 & $-\frac{1}{8}$ & 12.959 & $1,1,-\frac{\zeta^4_{14}}{\zeta^2_{14}},-\frac{\zeta^4_{14}}{\zeta^2_{14}},\frac{1}{\zeta^2_{14}},\frac{1}{\zeta^2_{14}},\frac{\zeta^6_{14}}{\zeta^2_{14}},\frac{\zeta^6_{14}}{\zeta^2_{14}}$ & $0,\frac{1}{2},\frac{1}{4},-\frac{1}{4},\frac{1}{8},-\frac{3}{8},-\frac{1}{8},\frac{3}{8}$ & Primitive \\
        
        \rowcolor[gray]{0.9}
        69 & $\frac{1}{8}$ & 5.7859 & $1,1,-\frac{\zeta^6_{14}}{\zeta^4_{14}},-\frac{\zeta^6_{14}}{\zeta^4_{14}},\frac{1}{\zeta^4_{14}},\frac{1}{\zeta^4_{14}},\frac{\zeta^2_{14}}{\zeta^4_{14}},\frac{\zeta^2_{14}}{\zeta^4_{14}}$ & $0,\frac{1}{2},-\frac{1}{8},\frac{3}{8},\frac{1}{8},-\frac{3}{8},\frac{1}{4},-\frac{1}{4}$ & Primitive \\
        
        \rowcolor[gray]{0.9}
        70 & $-\frac{1}{8}$ & 5.7859 & $1,1,-\frac{\zeta^6_{14}}{\zeta^4_{14}},-\frac{\zeta^6_{14}}{\zeta^4_{14}},\frac{1}{\zeta^4_{14}},\frac{1}{\zeta^4_{14}},\frac{\zeta^2_{14}}{\zeta^4_{14}},\frac{\zeta^2_{14}}{\zeta^4_{14}}$ & $0,\frac{1}{2},\frac{1}{8},-\frac{3}{8},-\frac{1}{8},\frac{3}{8},\frac{1}{4},-\frac{1}{4}$ &Primitive  \\
        
        \rowcolor[gray]{0.9}
        71 & $-\frac{1}{8}$ & 4.1583 & $1,1,-\frac{\zeta^4_{14}}{\zeta^6_{14}},-\frac{\zeta^4_{14}}{\zeta^6_{14}},-\frac{1}{\zeta^6_{14}},-\frac{1}{\zeta^6_{14}},\frac{\zeta^2_{14}}{\zeta^6_{14}},\frac{\zeta^2_{14}}{\zeta^6_{14}}$ & $0,\frac{1}{2},-\frac{1}{8},\frac{3}{8},\frac{1}{4},-\frac{1}{4},\frac{1}{8},-\frac{3}{8}$ & Primitive \\
        
        \rowcolor[gray]{0.9}
        72 & $\frac{1}{8}$ & 4.1583 & $1,1,-\frac{\zeta^4_{14}}{\zeta^6_{14}},-\frac{\zeta^4_{14}}{\zeta^6_{14}},-\frac{1}{\zeta^6_{14}},-\frac{1}{\zeta^6_{14}},\frac{\zeta^2_{14}}{\zeta^6_{14}},\frac{\zeta^2_{14}}{\zeta^6_{14}}$ & $0,\frac{1}{2},\frac{1}{8},-\frac{3}{8},\frac{1}{4},-\frac{1}{4},-\frac{1}{8},\frac{3}{8}$ &Primitive  \\
        \hline
    \end{tabular}}
    \caption{List of rank 8 fermionic MD. \emph{Continued.}}
    \label{tab:rank8b}
\end{table}


\begin{table}[t]
    \centering
    \resizebox{\columnwidth}{!}{
    \begin{tabular}{|c|c|c|c|c|c|}
        \hline
        \#& $c$ & $D^2$ & Quantum dimensions & Topological spins & Comments \\
        \hline
        1& 0 & 10 & $1,1,1,1,1,1,1,1,1,1$ & $0,\frac{1}{2},\frac{1}{10},-\frac{2}{5},\frac{1}{10},-\frac{2}{5},\frac{2}{5},-\frac{1}{10},\frac{2}{5},-\frac{1}{10}$ & $\cF_0 \boxtimes 5_4^B $ \\
        
        2& 0 & 10 & $1,1,1,1,1,1,1,1,1,1$ & $0,\frac{1}{2},\frac{1}{5},-\frac{3}{10},\frac{1}{5},-\frac{3}{10},\frac{3}{10},-\frac{1}{5},\frac{3}{10},-\frac{1}{5}$ & $\cF_0 \boxtimes 5_0^B$ \\
        
        \hline
        3& 0 & 24 & $1,1,1,1,\sqrt{3},\sqrt{3},\sqrt{3},\sqrt{3},2,2$ & $0,\frac{1}{2},0,\frac{1}{2},0,\frac{1}{2},0,\frac{1}{2},\frac{1}{6},-\frac{1}{3}$ &  
        Primitive\\
        
        4& 0 & 24 & $1,1,1,1,\sqrt{3},\sqrt{3},\sqrt{3},\sqrt{3},2,2$ & $0,\frac{1}{2},0,\frac{1}{2},0,\frac{1}{2},0,\frac{1}{2},\frac{1}{3},-\frac{1}{6}$ &  Primitive\\
        
        \rowcolor[gray]{0.9}
        5& 0 & 24 & $1,1,1,1,-\sqrt{3},-\sqrt{3},-\sqrt{3},-\sqrt{3},2,2$ & $0,\frac{1}{2},0,\frac{1}{2},0,\frac{1}{2},0,\frac{1}{2},\frac{1}{6},-\frac{1}{3}$ &  Primitive\\

        \rowcolor[gray]{0.9}
        6& 0 & 24 & $1,1,1,1,-\sqrt{3},-\sqrt{3},-\sqrt{3},-\sqrt{3},2,2$ & $0,\frac{1}{2},0,\frac{1}{2},0,\frac{1}{2},0,\frac{1}{2},\frac{1}{3},-\frac{1}{6}$ &  Primitive\\
        
        7& 0 & 24 & $1,1,1,1,\sqrt{3},\sqrt{3},\sqrt{3},\sqrt{3},2,2$ & $0,\frac{1}{2},0,\frac{1}{2},\frac{1}{4},-\frac{1}{4},\frac{1}{4},-\frac{1}{4},\frac{1}{6},-\frac{1}{3}$ &  Primitive\\
       
        8 & 0 & 24 & $1,1,1,1,\sqrt{3},\sqrt{3},\sqrt{3},\sqrt{3},2,2$ & $0,\frac{1}{2},0,\frac{1}{2},\frac{1}{4},-\frac{1}{4},\frac{1}{4},-\frac{1}{4},\frac{1}{3},-\frac{1}{6}$ &  Primitive\\
        
        \rowcolor[gray]{0.9}
        9& 0 & 24 & $1,1,1,1,-\sqrt{3},-\sqrt{3},-\sqrt{3},-\sqrt{3},2,2$ & $0,\frac{1}{2},0,\frac{1}{2},\frac{1}{4},-\frac{1}{4},\frac{1}{4},-\frac{1}{4},\frac{1}{6},-\frac{1}{3}$ & Primitive \\
        
        \rowcolor[gray]{0.9}
        10& 0 & 24 & $1,1,1,1,-\sqrt{3},-\sqrt{3},-\sqrt{3},-\sqrt{3},2,2$ & $0,\frac{1}{2},0,\frac{1}{2},\frac{1}{4},-\frac{1}{4},\frac{1}{4},-\frac{1}{4},\frac{1}{3},-\frac{1}{6}$ & Primitive \\
        
        11& 0 & 24 & $1,1,1,1,2,2,\sqrt{3},\sqrt{3},\sqrt{3},\sqrt{3}$ & $0,\frac{1}{2},0,\frac{1}{2},\frac{1}{6},-\frac{1}{3},\frac{1}{8},-\frac{3}{8},\frac{1}{8},-\frac{3}{8}$ & $\cF_0 \boxtimes 5^B_{\# 5}$ \\
      
        12 & 0 & 24 & $1,1,1,1,2,2,\sqrt{3},\sqrt{3},\sqrt{3},\sqrt{3}$ & $0,\frac{1}{2},0,\frac{1}{2},\frac{1}{6},-\frac{1}{3},\frac{3}{8},-\frac{1}{8},\frac{3}{8},-\frac{1}{8}$ & $\cF_0 \boxtimes 5^B_{\# 2}$ \\
        
        13 & 0 & 24 & $1,1,1,1,2,2,\sqrt{3},\sqrt{3},\sqrt{3},\sqrt{3}$ & $0,\frac{1}{2},0,\frac{1}{2},\frac{1}{3},-\frac{1}{6},\frac{1}{8},-\frac{3}{8},\frac{1}{8},-\frac{3}{8}$ & $\cF_0 \boxtimes 5^B_{\# 1}$ \\
        
        14 & 0 & 24 & $1,1,1,1,2,2,\sqrt{3},\sqrt{3},\sqrt{3},\sqrt{3}$ & $0,\frac{1}{2},0,\frac{1}{2},\frac{1}{3},-\frac{1}{6},\frac{3}{8},-\frac{1}{8},\frac{3}{8},-\frac{1}{8}$ & $\cF_0 \boxtimes 5^B_{\# 6}$ \\
        
        \rowcolor[gray]{0.9}
        15& 0 & 24 & $1,1,1,1,2,2,-\sqrt{3},-\sqrt{3},-\sqrt{3},-\sqrt{3}$ & $0,\frac{1}{2},0,\frac{1}{2},\frac{1}{6},-\frac{1}{3},\frac{1}{8},-\frac{3}{8},\frac{1}{8},-\frac{3}{8}$ & $\cF_0 \boxtimes 5^B_{\# 4}$ \\
        
        \rowcolor[gray]{0.9}
        16&0 & 24 & $1,1,1,1,2,2,-\sqrt{3},-\sqrt{3},-\sqrt{3},-\sqrt{3}$ & $0,\frac{1}{2},0,\frac{1}{2},\frac{1}{6},-\frac{1}{3},\frac{3}{8},-\frac{1}{8},\frac{3}{8},-\frac{1}{8}$ & $\cF_0 \boxtimes 5^B_{\# 8}$ \\
        
        \rowcolor[gray]{0.9}
        17& 0 & 24 & $1,1,1,1,2,2,-\sqrt{3},-\sqrt{3},-\sqrt{3},-\sqrt{3}$ & $0,\frac{1}{2},0,\frac{1}{2},\frac{1}{3},-\frac{1}{6},\frac{1}{8},-\frac{3}{8},\frac{1}{8},-\frac{3}{8}$ & $\cF_0 \boxtimes 5^B_{\# 7}$ \\
        
        \rowcolor[gray]{0.9}
        18& 0 & 24 & $1,1,1,1,2,2,-\sqrt{3},-\sqrt{3},-\sqrt{3},-\sqrt{3}$ & $0,\frac{1}{2},0,\frac{1}{2},\frac{1}{3},-\frac{1}{6},\frac{3}{8},-\frac{1}{8},\frac{3}{8},-\frac{1}{8}$ & $\cF_0 \boxtimes 5^B_{\# 3}$ \\
        
        \hline
        19& 0 & 40 & $1,1,1,1,2,2,2,2,\sqrt{10},\sqrt{10}$ & $0,\frac{1}{2},0,\frac{1}{2},\frac{1}{10},-\frac{2}{5},\frac{2}{5},-\frac{1}{10},\frac{1}{16},-\frac{7}{16}$ & Primitive ( $\cF_{U(1)_10/\ZZ_2}$)\\
        
        20& 0 & 40 & $1,1,1,1,2,2,2,2,\sqrt{10},\sqrt{10}$ & $0,\frac{1}{2},0,\frac{1}{2},\frac{1}{10},-\frac{2}{5},\frac{2}{5},-\frac{1}{10},\frac{3}{16},-\frac{5}{16}$ & Primitive \\
        
        21& 0 & 40 & $1,1,1,1,2,2,2,2,\sqrt{10},\sqrt{10}$ & $0,\frac{1}{2},0,\frac{1}{2},\frac{1}{10},-\frac{2}{5},\frac{2}{5},-\frac{1}{10},\frac{5}{16},-\frac{3}{16}$ &  Primitive\\
        
        22& 0 & 40 & $1,1,1,1,2,2,2,2,\sqrt{10},\sqrt{10}$ & $0,\frac{1}{2},0,\frac{1}{2},\frac{1}{10},-\frac{2}{5},\frac{2}{5},-\frac{1}{10},\frac{7}{16},-\frac{1}{16}$ & Primitive \\
        
        23& 0 & 40 & $1,1,1,1,2,2,2,2,\sqrt{10},\sqrt{10}$ & $0,\frac{1}{2},0,\frac{1}{2},\frac{1}{5},-\frac{3}{10},\frac{3}{10},-\frac{1}{5},\frac{1}{16},-\frac{7}{16}$ & Primitive \\
        
        24& 0 & 40 & $1,1,1,1,2,2,2,2,\sqrt{10},\sqrt{10}$ & $0,\frac{1}{2},0,\frac{1}{2},\frac{1}{5},-\frac{3}{10},\frac{3}{10},-\frac{1}{5},\frac{3}{16},-\frac{5}{16}$ & Primitive \\
        
        25& 0 & 40 & $1,1,1,1,2,2,2,2,\sqrt{10},\sqrt{10}$ & $0,\frac{1}{2},0,\frac{1}{2},\frac{1}{5},-\frac{3}{10},\frac{3}{10},-\frac{1}{5},\frac{5}{16},-\frac{3}{16}$ & Primitive \\
        
        26& 0 & 40 & $1,1,1,1,2,2,2,2,\sqrt{10},\sqrt{10}$ & $0,\frac{1}{2},0,\frac{1}{2},\frac{1}{5},-\frac{3}{10},\frac{3}{10},-\frac{1}{5},\frac{7}{16},-\frac{1}{16}$ & Primitive \\
        
        \rowcolor[gray]{0.9}
        27& 0 & 40 & $1,1,1,1,2,2,2,2,-\sqrt{10},-\sqrt{10}$ & $0,\frac{1}{2},0,\frac{1}{2},\frac{1}{10},-\frac{2}{5},\frac{2}{5},-\frac{1}{10},\frac{1}{16},-\frac{7}{16}$ & Primitive \\
        
        \rowcolor[gray]{0.9}
        28& 0 & 40 & $1,1,1,1,2,2,2,2,-\sqrt{10},-\sqrt{10}$ & $0,\frac{1}{2},0,\frac{1}{2},\frac{1}{10},-\frac{2}{5},\frac{2}{5},-\frac{1}{10},\frac{3}{16},-\frac{5}{16}$ & Primitive \\
        
        \rowcolor[gray]{0.9}
        29& 0 & 40 & $1,1,1,1,2,2,2,2,-\sqrt{10},-\sqrt{10}$ & $0,\frac{1}{2},0,\frac{1}{2},\frac{1}{10},-\frac{2}{5},\frac{2}{5},-\frac{1}{10},\frac{5}{16},-\frac{3}{16}$ & Primitive \\

        \rowcolor[gray]{0.9}
        30& 0 & 40 & $1,1,1,1,2,2,2,2,-\sqrt{10},-\sqrt{10}$ & $0,\frac{1}{2},0,\frac{1}{2},\frac{1}{10},-\frac{2}{5},\frac{2}{5},-\frac{1}{10},\frac{7}{16},-\frac{1}{16}$ & Primitive \\
        
        \rowcolor[gray]{0.9}
        31& 0 & 40 & $1,1,1,1,2,2,2,2,-\sqrt{10},-\sqrt{10}$ & $0,\frac{1}{2},0,\frac{1}{2},\frac{1}{5},-\frac{3}{10},\frac{3}{10},-\frac{1}{5},\frac{1}{16},-\frac{7}{16}$ &  Primitive\\
        
        \rowcolor[gray]{0.9}
        32& 0 & 40 & $1,1,1,1,2,2,2,2,-\sqrt{10},-\sqrt{10}$ & $0,\frac{1}{2},0,\frac{1}{2},\frac{1}{5},-\frac{3}{10},\frac{3}{10},-\frac{1}{5},\frac{3}{16},-\frac{5}{16}$ & Primitive \\
        
        \rowcolor[gray]{0.9}
        33& 0 & 40 & $1,1,1,1,2,2,2,2,-\sqrt{10},-\sqrt{10}$ & $0,\frac{1}{2},0,\frac{1}{2},\frac{1}{5},-\frac{3}{10},\frac{3}{10},-\frac{1}{5},\frac{5}{16},-\frac{3}{16}$ & Primitive \\
        
        \rowcolor[gray]{0.9}
        34& 0 & 40 & $1,1,1,1,2,2,2,2,-\sqrt{10},-\sqrt{10}$ & $0,\frac{1}{2},0,\frac{1}{2},\frac{1}{5},-\frac{3}{10},\frac{3}{10},-\frac{1}{5},\frac{7}{16},-\frac{1}{16}$ & Primitive \\
        
        \hline
        35 & $\frac{1}{22}$ & 69.2929 & $1,1,\zeta^1_9,\zeta^1_9,\zeta^2_9,\zeta^2_9,\zeta^3_9,\zeta^3_9,\zeta^4_9,\zeta^4_9$ & $0,\frac{1}{2},\frac{2}{11},-\frac{7}{22},\frac{7}{22},-\frac{2}{11},\frac{9}{22},-\frac{1}{11},\frac{5}{11},-\frac{1}{22}$ & $\cF_0 \boxtimes 5^B_{\# 15}$ \\
        
        36 & $-\frac{1}{22}$ & 69.2929 & $1,1,\zeta^4_9,\zeta^4_9,\zeta^3_9,\zeta^3_9,\zeta^2_9,\zeta^2_9,\zeta^1_9,\zeta^1_9$ & $0,\frac{1}{2},\frac{1}{22},-\frac{5}{11},\frac{1}{11},-\frac{9}{22},\frac{2}{11},-\frac{7}{22},\frac{7}{22},-\frac{2}{11}$ &  $\cF_0 \boxtimes 5^B_{\# 16}$\\
        
        \rowcolor[gray]{0.9}
        37& $\frac{5}{22}$ & 5.6137 & $1,1,-\frac{\zeta^1_{9}}{\zeta^4_{9}},-\frac{\zeta^1_{9}}{\zeta^4_{9}},-\frac{\zeta^3_{9}}{\zeta^4_{9}},-\frac{\zeta^3_{9}}{\zeta^4_{9}},\frac{\zeta^2_{9}}{\zeta^4_{9}},\frac{\zeta^2_{9}}{\zeta^4_{9}},\frac{1}{\zeta^4_{9}},\frac{1}{\zeta^4_{9}}$ & $0,\frac{1}{2},\frac{1}{22},-\frac{5}{11},\frac{1}{11},-\frac{9}{22},\frac{3}{11},-\frac{5}{22},\frac{9}{22},-\frac{1}{11}$ & $\cF_0 \boxtimes 5^B_{\# 23}$ \\
        
        \rowcolor[gray]{0.9}
        38& $-\frac{5}{22}$ & 5.6137 & $1,1,\frac{1}{\zeta^4_{9}},\frac{1}{\zeta^4_{9}},\frac{\zeta^2_{9}}{\zeta^4_{9}},\frac{\zeta^2_{9}}{\zeta^4_{9}},-\frac{\zeta^3_{9}}{\zeta^4_{9}},-\frac{\zeta^3_{9}}{\zeta^4_{9}},-\frac{\zeta^1_{9}}{\zeta^4_{9}},-\frac{\zeta^1_{9}}{\zeta^4_{9}}$ & $0,\frac{1}{2},\frac{1}{11},-\frac{9}{22},\frac{5}{22},-\frac{3}{11},\frac{9}{22},-\frac{1}{11},\frac{5}{11},-\frac{1}{22}$ & $\cF_0 \boxtimes 5^B_{\# 18}$ \\
        
        \rowcolor[gray]{0.9}
        39& $\frac{2}{11}$ & 6.64709 & $1,1,\frac{\zeta^4_{9}}{\zeta^3_{9}},\frac{\zeta^4_{9}}{\zeta^3_{9}},-\frac{\zeta^2_{9}}{\zeta^3_{9}},-\frac{\zeta^2_{9}}{\zeta^3_{9}},-\frac{1}{\zeta^3_{9}},-\frac{1}{\zeta^3_{9}},-\frac{\zeta^1_{9}}{\zeta^3_{9}},-\frac{\zeta^1_{9}}{\zeta^3_{9}}$ & $0,\frac{1}{2},\frac{3}{22},-\frac{4}{11},\frac{5}{22},-\frac{3}{11},\frac{3}{11},-\frac{5}{22},\frac{7}{22},-\frac{2}{11}$ & $\cF_0 \boxtimes 5^B_{\# 22}$ \\
        
        \rowcolor[gray]{0.9}
        40& $-\frac{2}{11}$ & 6.64709 & $1,1,-\frac{\zeta^1_{9}}{\zeta^3_{9}},-\frac{\zeta^1_{9}}{\zeta^3_{9}},-\frac{1}{\zeta^3_{9}},-\frac{1}{\zeta^3_{9}},-\frac{\zeta^2_{9}}{\zeta^3_{9}},-\frac{\zeta^2_{9}}{\zeta^3_{9}},\frac{\zeta^4_{9}}{\zeta^3_{9}},\frac{\zeta^4_{9}}{\zeta^3_{9}}$ & $0,\frac{1}{2},\frac{2}{11},-\frac{7}{22},\frac{5}{22},-\frac{3}{11},\frac{3}{11},-\frac{5}{22},\frac{4}{11},-\frac{3}{22}$ & $\cF_0 \boxtimes 5^B_{\# 19}$ \\
        
        \rowcolor[gray]{0.9}
        41& $\frac{3}{22}$ & 9.62957 & $1,1,\frac{\zeta^4_{9}}{\zeta^2_{9}},\frac{\zeta^4_{9}}{\zeta^2_{9}},-\frac{1}{\zeta^2_{9}},-\frac{1}{\zeta^2_{9}},-\frac{\zeta^3_{9}}{\zeta^2_{9}},-\frac{\zeta^3_{9}}{\zeta^2_{9}},\frac{\zeta^1_{9}}{\zeta^2_{9}},\frac{\zeta^1_{9}}{\zeta^2_{9}}$ & $0,\frac{1}{2},\frac{1}{22},-\frac{5}{11},\frac{5}{22},-\frac{3}{11},\frac{4}{11},-\frac{3}{22},\frac{5}{11},-\frac{1}{22}$ &  $\cF_0 \boxtimes 5^B_{\# 20}$\\
        
        \rowcolor[gray]{0.9}
        42& $-\frac{3}{22}$ & 9.62957 & $1,1,\frac{\zeta^1_{9}}{\zeta^2_{9}},\frac{\zeta^1_{9}}{\zeta^2_{9}},-\frac{\zeta^3_{9}}{\zeta^2_{9}},-\frac{\zeta^3_{9}}{\zeta^2_{9}},-\frac{1}{\zeta^2_{9}},-\frac{1}{\zeta^2_{9}},\frac{\zeta^4_{9}}{\zeta^2_{9}},\frac{\zeta^4_{9}}{\zeta^2_{9}}$ & $0,\frac{1}{2},\frac{1}{22},-\frac{5}{11},\frac{3}{22},-\frac{4}{11},\frac{3}{11},-\frac{5}{22},\frac{5}{11},-\frac{1}{22}$ &  $\cF_0 \boxtimes 5^B_{\# 21}$\\
        
        \rowcolor[gray]{0.9}
        43& $\frac{1}{11}$ & 18.8168 & $1,1,\frac{\zeta^4_{9}}{\zeta^1_{9}},\frac{\zeta^4_{9}}{\zeta^1_{9}},-\frac{\zeta^2_{9}}{\zeta^1_{9}},-\frac{\zeta^2_{9}}{\zeta^1_{9}},-\frac{\zeta^3_{9}}{\zeta^1_{9}},-\frac{\zeta^3_{9}}{\zeta^1_{9}},\frac{1}{\zeta^1_{9}},\frac{1}{\zeta^1_{9}}$ & $0,\frac{1}{2},\frac{3}{22},-\frac{4}{11},\frac{7}{22},-\frac{2}{11},\frac{4}{11},-\frac{3}{22},\frac{9}{22},-\frac{1}{11}$ & $\cF_0 \boxtimes 5^B_{\# 24}$ \\
        
        \rowcolor[gray]{0.9} 
        44& $-\frac{1}{11}$ & 18.8168 & $1,1,\frac{1}{\zeta^1_{9}},\frac{1}{\zeta^1_{9}},-\frac{\zeta^3_{9}}{\zeta^1_{9}},-\frac{\zeta^3_{9}}{\zeta^1_{9}},-\frac{\zeta^2_{9}}{\zeta^1_{9}},-\frac{\zeta^2_{9}}{\zeta^1_{9}},\frac{\zeta^4_{9}}{\zeta^1_{9}},\frac{\zeta^4_{9}}{\zeta^1_{9}}$ & $0,\frac{1}{2},\frac{1}{11},-\frac{9}{22},\frac{3}{22},-\frac{4}{11},\frac{2}{11},-\frac{7}{22},\frac{4}{11},-\frac{3}{22}$ &  $\cF_0 \boxtimes 5^B_{\# 17}$ \\
        \hline 
    \end{tabular}}
    \caption{List of rank 10 fermionic MD.}
    \label{tab:rank10}
\end{table}

\begin{table}[t]
    \centering
    \resizebox{\columnwidth}{!}{
    \begin{tabular}{|c|c|c|c|c|c|}
        \hline
        \# & $c$ & $D^2$ & Quantum dimensions & Topological spins & Comments \\
        
        \hline
        45& $\frac{1}{14}$ & 70.6848 & $1,1,\zeta^2_{12},\zeta^2_{12},\zeta^2_5,\zeta^2_5,\zeta^2_5,\zeta^2_5,\zeta^4_{12},\zeta^4_{12}$ & $0,\frac{1}{2},\frac{1}{7},-\frac{5}{14},\frac{5}{14},-\frac{1}{7},\frac{5}{14},-\frac{1}{7},\frac{3}{7},-\frac{1}{14}$ &  $\cF_0 \boxtimes 5^B_{\# 10}$\\
        
        46& $-\frac{1}{14}$ & 70.6848 & $1,1,\zeta^4_{12},\zeta^4_{12},\zeta^2_5,\zeta^2_5,\zeta^2_5,\zeta^2_5,\zeta^2_{12},\zeta^2_{12}$ & $0,\frac{1}{2},\frac{1}{14},-\frac{3}{7},\frac{1}{7},-\frac{5}{14},\frac{1}{7},-\frac{5}{14},\frac{5}{14},-\frac{1}{7}$ & $\cF_0 \boxtimes 5^B_{\# 9}$ \\
        
        \rowcolor[gray]{0.9}
        47& $\frac{1}{7}$ & 4.3117 & $1,1,\frac{\zeta^2_5}{\zeta^4_{12}},\frac{\zeta^2_5}{\zeta^4_{12}},\frac{\zeta^2_5}{\zeta^4_{12}},\frac{\zeta^2_5}{\zeta^4_{12}},-\frac{1}{\zeta^4_{12}},-\frac{1}{\zeta^4_{12}},-\frac{\zeta^2_{12}}{\zeta^4_{12}},-\frac{\zeta^2_{12}}{\zeta^4_{12}}$ & $0,\frac{1}{2},\frac{3}{14},-\frac{2}{7},\frac{3}{14},-\frac{2}{7},\frac{2}{7},-\frac{3}{14},\frac{5}{14},-\frac{1}{7}$ &  $\cF_0 \boxtimes 5^B_{\# 14}$\\
        
        \rowcolor[gray]{0.9} 
        48& $-\frac{1}{7}$ & 4.3117 & $1,1,-\frac{\zeta^2_{12}}{\zeta^4_{12}},-\frac{\zeta^2_{12}}{\zeta^4_{12}},-\frac{1}{\zeta^4_{12}},-\frac{1}{\zeta^4_{12}},\frac{\zeta^2_5}{\zeta^4_{12}},\frac{\zeta^2_5}{\zeta^4_{12}},\frac{\zeta^2_5}{\zeta^4_{12}},\frac{\zeta^2_5}{\zeta^4_{12}}$ & $0,\frac{1}{2},\frac{1}{7},-\frac{5}{14},\frac{3}{14},-\frac{2}{7},\frac{2}{7},-\frac{3}{14},\frac{2}{7},-\frac{3}{14}$ &  $\cF_0 \boxtimes 5^B_{\# 11}$\\
        
        \rowcolor[gray]{0.9} 
        49& $\frac{3}{14}$ & 9.00346 & $1,1,-\frac{\zeta^2_{5}}{\zeta^2_{12}},-\frac{\zeta^2_{5}}{\zeta^2_{12}},-\frac{\zeta^2_{5}}{\zeta^2_{12}},-\frac{\zeta^2_{5}}{\zeta^2_{12}},-\frac{1}{\zeta^2_{12}},-\frac{1}{\zeta^2_{12}},\frac{\zeta^4_{12}}{\zeta^2_{12}},\frac{\zeta^4_{12}}{\zeta^2_{12}}$ & $0,\frac{1}{2},\frac{1}{14},-\frac{3}{7},\frac{1}{14},-\frac{3}{7},\frac{2}{7},-\frac{3}{14},\frac{3}{7},-\frac{1}{14}$ &  $\cF_0 \boxtimes 5^B_{\# 12}$\\
        
        \rowcolor[gray]{0.9} 
        50& $-\frac{3}{14}$ & 9.00346 & $1,1,\frac{\zeta^4_{12}}{\zeta^2_{12}},\frac{\zeta^4_{12}}{\zeta^2_{12}},-\frac{1}{\zeta^2_{12}},-\frac{1}{\zeta^2_{12}},-\frac{\zeta^2_{5}}{\zeta^2_{12}},-\frac{\zeta^2_{5}}{\zeta^2_{12}},-\frac{\zeta^2_{5}}{\zeta^2_{12}},-\frac{\zeta^2_{5}}{\zeta^2_{12}}$ & $0,\frac{1}{2},\frac{1}{14},-\frac{3}{7},\frac{3}{14},-\frac{2}{7},\frac{3}{7},-\frac{1}{14},\frac{3}{7},-\frac{1}{14}$ & $\cF_0 \boxtimes 5^B_{\# 13}$ \\
        
        \hline
        51& $\frac{1}{5}$ & 204.317 & $1,1,\zeta^8_{18},\zeta^8_{18},\zeta^6_{18},\zeta^6_{18},\zeta^2_{18},\zeta^2_{18},\zeta^4_{18},\zeta^4_{18}$ & $0,\frac{1}{2},0,\frac{1}{2},\frac{1}{10},-\frac{2}{5},\frac{1}{10},-\frac{2}{5},\frac{3}{10},-\frac{1}{5}$ &  Primitive ($\cF_{(A_1, 18)} $ )\\
        
        52& $-\frac{1}{5}$ & 204.317 & $1,1,\zeta^8_{18},\zeta^8_{18},\zeta^4_{18},\zeta^4_{18},\zeta^6_{18},\zeta^6_{18},\zeta^2_{18},\zeta^2_{18}$ & $0,\frac{1}{2},0,\frac{1}{2},\frac{1}{5},-\frac{3}{10},\frac{2}{5},-\frac{1}{10},\frac{2}{5},-\frac{1}{10}$ &  Primitive ($\cF_{(A_1, -18)} $ ) \\
        
        \rowcolor[gray]{0.9}
        53& $\frac{1}{5}$ & 5.12543 & $1,1,\frac{1}{\zeta^8_{18}},\frac{1}{\zeta^8_{18}},-\frac{\zeta^2_{18}}{\zeta^8_{18}},-\frac{\zeta^2_{18}}{\zeta^8_{18}},-\frac{\zeta^6_{18}}{\zeta^8_{18}},-\frac{\zeta^6_{18}}{\zeta^8_{18}},\frac{\zeta^4_{18}}{\zeta^8_{18}},\frac{\zeta^4_{18}}{\zeta^8_{18}}$ & $0,\frac{1}{2},0,\frac{1}{2},\frac{1}{10},-\frac{2}{5},\frac{1}{10},-\frac{2}{5},\frac{3}{10},-\frac{1}{5}$ & Primitive \\
        
        \rowcolor[gray]{0.9}
        54& $-\frac{1}{5}$ & 5.12543 & $1,1,\frac{1}{\zeta^8_{18}},\frac{1}{\zeta^8_{18}},\frac{\zeta^4_{18}}{\zeta^8_{18}},\frac{\zeta^4_{18}}{\zeta^8_{18}},-\frac{\zeta^2_{18}}{\zeta^8_{18}},-\frac{\zeta^2_{18}}{\zeta^8_{18}},-\frac{\zeta^6_{18}}{\zeta^8_{18}},-\frac{\zeta^6_{18}}{\zeta^8_{18}}$ & $0,\frac{1}{2},0,\frac{1}{2},\frac{1}{5},-\frac{3}{10},\frac{2}{5},-\frac{1}{10},\frac{2}{5},-\frac{1}{10}$ & Primitive \\
        
        \rowcolor[gray]{0.9}
        55& $\frac{1}{10}$ & 6.29808 & $1,1,-\frac{\zeta^2_{18}}{\zeta^6_{18}},-\frac{\zeta^2_{18}}{\zeta^6_{18}},\frac{\zeta^8_{18}}{\zeta^6_{18}},\frac{\zeta^8_{18}}{\zeta^6_{18}},-\frac{1}{\zeta^6_{18}},-\frac{1}{\zeta^6_{18}},-\frac{\zeta^4_{18}}{\zeta^6_{18}},-\frac{\zeta^4_{18}}{\zeta^6_{18}}$ & $0,\frac{1}{2},0,\frac{1}{2},\frac{3}{10},-\frac{1}{5},\frac{3}{10},-\frac{1}{5},\frac{2}{5},-\frac{1}{10}$ & Primitive \\
        
        \rowcolor[gray]{0.9}
        56&$-\frac{1}{10}$ & 6.29808 & $1,1,-\frac{\zeta^2_{18}}{\zeta^6_{18}},-\frac{\zeta^2_{18}}{\zeta^6_{18}},-\frac{\zeta^4_{18}}{\zeta^6_{18}},-\frac{\zeta^4_{18}}{\zeta^6_{18}},\frac{\zeta^8_{18}}{\zeta^6_{18}},\frac{\zeta^8_{18}}{\zeta^6_{18}},-\frac{1}{\zeta^6_{18}},-\frac{1}{\zeta^6_{18}}$ & $0,\frac{1}{2},0,\frac{1}{2},\frac{1}{10},-\frac{2}{5},\frac{1}{5},-\frac{3}{10},\frac{1}{5},-\frac{3}{10}$ & Primitive \\
        
        \rowcolor[gray]{0.9} 
        57& $\frac{1}{10}$ & 24.2592 & $1,1,-\frac{\zeta^6_{18}}{\zeta^2_{18}},-\frac{\zeta^6_{18}}{\zeta^2_{18}},-\frac{1}{\zeta^2_{18}},-\frac{1}{\zeta^2_{18}},\frac{\zeta^8_{18}}{\zeta^2_{18}},\frac{\zeta^8_{18}}{\zeta^2_{18}},\frac{\zeta^4_{18}}{\zeta^2_{18}},\frac{\zeta^4_{18}}{\zeta^2_{18}}$ & $0,\frac{1}{2},0,\frac{1}{2},\frac{3}{10},-\frac{1}{5},\frac{3}{10},-\frac{1}{5},\frac{2}{5},-\frac{1}{10}$ & Primitive \\
        
        \rowcolor[gray]{0.9}
        58& $-\frac{1}{10}$ & 24.2592 & $1,1,-\frac{\zeta^6_{18}}{\zeta^2_{18}},-\frac{\zeta^6_{18}}{\zeta^2_{18}},\frac{\zeta^4_{18}}{\zeta^2_{18}},\frac{\zeta^4_{18}}{\zeta^2_{18}},-\frac{1}{\zeta^2_{18}},-\frac{1}{\zeta^2_{18}},\frac{\zeta^8_{18}}{\zeta^2_{18}},\frac{\zeta^8_{18}}{\zeta^2_{18}}$ & $0,\frac{1}{2},0,\frac{1}{2},\frac{1}{10},-\frac{2}{5},\frac{1}{5},-\frac{3}{10},\frac{1}{5},-\frac{3}{10}$ & Primitive \\
        
        \hline
        59& 0 & 472.379 & $1,1,\chi^4_{15},\chi^4_{15},\chi^5_{15},\chi^5_{15},\chi^3_{15},\chi^3_{15},\chi^3_{15},\chi^3_{15}$ & $0,\frac{1}{2},0,\frac{1}{2},\frac{1}{6},-\frac{1}{3},\frac{1}{10},-\frac{2}{5},\frac{2}{5},-\frac{1}{10}$ &  Primitive, $\hat{N} \leq 3$\\
        
        60& 0 & 472.379 & $1,1,\chi^4_{15},\chi^4_{15},\chi^5_{15},\chi^5_{15},\chi^3_{15},\chi^3_{15},\chi^3_{15},\chi^3_{15}$ & $0,\frac{1}{2},0,\frac{1}{2},\frac{1}{6},-\frac{1}{3},\frac{1}{5},-\frac{3}{10},\frac{3}{10},-\frac{1}{5}$ & Primitive, $\hat{N} \leq 3$ \\
        
        61& 0 & 472.379 & $1,1,\chi^4_{15},\chi^4_{15},\chi^5_{15},\chi^5_{15},\chi^3_{15},\chi^3_{15},\chi^3_{15},\chi^3_{15}$ & $0,\frac{1}{2},0,\frac{1}{2},\frac{1}{3},-\frac{1}{6},\frac{1}{10},-\frac{2}{5},\frac{2}{5},-\frac{1}{10}$ &  Primitive, $\hat{N} \leq 3$\\
        
        62& 0 & 472.379 & $1,1,\chi^4_{15},\chi^4_{15},\chi^5_{15},\chi^5_{15},\chi^3_{15},\chi^3_{15},\chi^3_{15},\chi^3_{15}$ & $0,\frac{1}{2},0,\frac{1}{2},\frac{1}{3},-\frac{1}{6},\frac{1}{5},-\frac{3}{10},\frac{3}{10},-\frac{1}{5}$ &  Primitive, $\hat{N} \leq 3$\\
        
        \rowcolor[gray]{0.9}
        63& 0 & 7.621 & $1,1,\frac{1}{\chi^4_{15}},\frac{1}{\chi^4_{15}},\frac{\chi^5_{15}}{\chi^4_{15}},\frac{\chi^5_{15}}{\chi^4_{15}},-\frac{\chi^3_{15}}{\chi^4_{15}},-\frac{\chi^3_{15}}{\chi^4_{15}},-\frac{\chi^3_{15}}{\chi^4_{15}},-\frac{\chi^3_{15}}{\chi^4_{15}}$ & $0,\frac{1}{2},0,\frac{1}{2},\frac{1}{6},-\frac{1}{3},\frac{1}{10},-\frac{2}{5},\frac{2}{5},-\frac{1}{10}$ &  Primitive, $\hat{N} \leq 3$\\
        
        \rowcolor[gray]{0.9}
        64& 0 & 7.621 & $1,1,\frac{1}{\chi^4_{15}},\frac{1}{\chi^4_{15}},\frac{\chi^5_{15}}{\chi^4_{15}},\frac{\chi^5_{15}}{\chi^4_{15}},-\frac{\chi^3_{15}}{\chi^4_{15}},-\frac{\chi^3_{15}}{\chi^4_{15}},-\frac{\chi^3_{15}}{\chi^4_{15}},-\frac{\chi^3_{15}}{\chi^4_{15}}$ & $0,\frac{1}{2},0,\frac{1}{2},\frac{1}{6},-\frac{1}{3},\frac{1}{5},-\frac{3}{10},\frac{3}{10},-\frac{1}{5}$ &  Primitive, $\hat{N} \leq 3$\\
        
        \rowcolor[gray]{0.9} 
        65& 0 & 7.621 & $1,1,\frac{1}{\chi^4_{15}},\frac{1}{\chi^4_{15}},\frac{\chi^5_{15}}{\chi^4_{15}},\frac{\chi^5_{15}}{\chi^4_{15}},-\frac{\chi^3_{15}}{\chi^4_{15}},-\frac{\chi^3_{15}}{\chi^4_{15}},-\frac{\chi^3_{15}}{\chi^4_{15}},-\frac{\chi^3_{15}}{\chi^4_{15}}$ & $0,\frac{1}{2},0,\frac{1}{2},\frac{1}{3},-\frac{1}{6},\frac{1}{10},-\frac{2}{5},\frac{2}{5},-\frac{1}{10}$ &  Primitive, $\hat{N} \leq 3$\\
        
        \rowcolor[gray]{0.9}
        66& 0 & 7.621 & $1,1,\frac{1}{\chi^4_{15}},\frac{1}{\chi^4_{15}},\frac{\chi^5_{15}}{\chi^4_{15}},\frac{\chi^5_{15}}{\chi^4_{15}},-\frac{\chi^3_{15}}{\chi^4_{15}},-\frac{\chi^3_{15}}{\chi^4_{15}},-\frac{\chi^3_{15}}{\chi^4_{15}},-\frac{\chi^3_{15}}{\chi^4_{15}}$ & $0,\frac{1}{2},0,\frac{1}{2},\frac{1}{3},-\frac{1}{6},\frac{1}{5},-\frac{3}{10},\frac{3}{10},-\frac{1}{5}$ &  Primitive, $\hat{N} \leq 3$\\
        
        \hline
        67& 0 & 475.151 & $1,1,\chi^5_{24},\chi^5_{24},\chi^3_{6},\chi^3_{6},\chi^3_{6},\chi^3_{6},\chi^4_{24},\chi^4_{24}$ & $0,\frac{1}{2},0,\frac{1}{2},\frac{1}{4},-\frac{1}{4},\frac{1}{4},-\frac{1}{4},\frac{1}{6},-\frac{1}{3}$ &  Primitive, $\hat{N} \leq 4$\\
        
        68& 0 & 475.151 & $1,1,\chi^5_{24},\chi^5_{24},\chi^3_{6},\chi^3_{6},\chi^3_{6},\chi^3_{6},\chi^4_{24},\chi^4_{24}$ & $0,\frac{1}{2},0,\frac{1}{2},\frac{1}{4},-\frac{1}{4},\frac{1}{4},-\frac{1}{4},\frac{1}{3},-\frac{1}{6}$ &  Primitive, $\hat{N} \leq 4$\\
        
        \rowcolor[gray]{0.9}
        69& 0 & 4.84898 & $1,1,\frac{1}{\chi^5_{24}},\frac{1}{\chi^5_{24}},\frac{\chi^3_6}{\chi^5_{24}},\frac{\chi^3_6}{\chi^5_{24}},\frac{\chi^3_6}{\chi^5_{24}},\frac{\chi^3_6}{\chi^5_{24}},-\frac{\chi^4_{24}}{\chi^5_{24}},-\frac{\chi^4_{24}}{\chi^5_{24}}$ & $0,\frac{1}{2},0,\frac{1}{2},\frac{1}{4},-\frac{1}{4},\frac{1}{4},-\frac{1}{4},\frac{1}{6},-\frac{1}{3}$ &  Primitive, $\hat{N} \leq 4$\\
        
        \rowcolor[gray]{0.9}
        70& 0 & 4.84898 & $1,1,\frac{1}{\chi^5_{24}},\frac{1}{\chi^5_{24}},\frac{\chi^3_6}{\chi^5_{24}},\frac{\chi^3_6}{\chi^5_{24}},\frac{\chi^3_6}{\chi^5_{24}},\frac{\chi^3_6}{\chi^5_{24}},-\frac{\chi^4_{24}}{\chi^5_{24}},-\frac{\chi^4_{24}}{\chi^5_{24}}$ & $0,\frac{1}{2},0,\frac{1}{2},\frac{1}{4},-\frac{1}{4},\frac{1}{4},-\frac{1}{4},\frac{1}{3},-\frac{1}{6}$ & Primitive, $\hat{N} \leq 4$ \\
        \hline
    \end{tabular}}
    \caption{List of rank 10 fermionic MD. \emph{Continued}. MD of \#59--\#62 and \#67--\#68 are of new unitary SMCs which are previously unknown, and \#63--\#66 and \#69--\#70 are their nonunitary counterparts with the same fusion rules.}
    \label{tab:rank10b}
\end{table}

\end{document}